\newcommand{\algorithmicinput}{\textbf{input}}
\newcommand{\INPUT}{\item[\algorithmicinput]}
\newcommand{\ceil}[1]{\left\lceil {#1} \right\rceil}
\newcolumntype{G}{>{\columncolor{Green!50}}c}
\newcolumntype{B}{>{\columncolor{Blue!40}}c}
\newcommand{\ba}{\begin{align}}
\newcommand{\ea}{\end{align}}
\newcommand\abs[1]{\left|#1\right|}
\newcommand{\Z}{\ensuremath{\mathbb{Z}}} 
\DeclareMathOperator*{\argmin}{arg\,min}
\newtheorem{theorem}{Theorem}[section]
\newtheorem*{theorem*}{Theorem}
\newtheorem{corollary}{Corollary}[theorem]
\newtheorem{lemma}{Lemma}[section]
\newtheorem{definition}{Definition}
\begin{document}

\title{Flag Gadgets based on Classical Codes}
\author{Benjamin Anker}
\affiliation{Department of Electrical and Computer Engineering}
    \affiliation{Center for Quantum Information and Control\\University of New Mexico}
\author{Milad Marvian}
\affiliation{Department of Electrical and Computer Engineering}
    \affiliation{Center for Quantum Information and Control\\University of New Mexico}

\begin{abstract}
Fault-tolerant syndrome extraction is a key ingredient in implementing fault-tolerant quantum computations. While conventional methods use a number of extra qubits linear in the weight of the syndrome, several improvements have been introduced using flag gadgets.
In this work, we develop a framework to design flag gadgets using classical codes. Using this framework we show how to perform fault-tolerant syndrome extraction for any stabilizer code with arbitrary distance using exponentially fewer qubits than conventional methods when qubit measurement and reset are relatively slow compared to a round of error correction. 
We further take advantage of the saving provided by our construction to fault-tolerantly measure multiple stabilizers using a single gadget, and show that it maintains the same exponential advantage when it is used to fault-tolerantly extract the syndrome of quantum LDPC codes. Using the developed framework we perform computer-assisted search to find several small examples where our constructions reduce the number of qubits required. These small examples may be relevant to near-term experiments on small-scale quantum computers.

\end{abstract}

\maketitle

\section{Introduction}

A key challenge in designing fault-tolerant circuits is to control the spread of faults in the computation. The spread of faults in performing certain quantum operations can be controlled by the transversal implementation of the encoded operation, i.e., performing encoded operations using parallel local operations. Incorporating this strategy to perform syndrome measurement requires using many ancilla qubits and preparing certain less noisy highly entangled quantum states, which both need to be done using faulty quantum operations. To satisfy these requirements, the traditional strategy has been to take a conservative approach to detect errors in any circuit location by introducing many ancillary qubits and gates, and discarding and restarting the fault-tolerant subroutine any time that an error is detected. Since syndrome extraction is the most frequently needed subroutine in fault-tolerant constructions, this strategy leads to significant overhead in the number of qubits and gates to make a quantum algorithm fault-tolerant. For example, Shor's fault-tolerant error correction scheme~\cite{sh,divincenzo1996fault} requires at least as many ancilla qubits as the weight of the syndrome measurement, Steane's error correction scheme~\cite{st} requires as many ancilla qubits as the size of the code block, and recent progress has shown how to interpolate between the cost of these two schemes~\cite{huang2021between,huang2021constructions}. Recently, it has been shown that  careful designing gadgets that can “flag” the location of errors and adopting a more relaxed strategy in the error correction procedure can lead to substantial savings in the overhead cost of fault-tolerant schemes~\cite{1,2}.

The flag qubit paradigm \cite{1, 3, 4, 5} 
allows low-weight errors during syndrome extraction to propagate to high-weight correlated errors on the data. However,  extra structure is added to the syndrome extraction circuit so that the low-weight errors can be identified and the correlated errors can be corrected. As originally presented \cite{1}, the flag qubit technique only applies to the extraction of a handful of different syndromes and may require adaptive measurements, where syndrome measurements depend upon the flag pattern received. Since its inception, the technique has been extended in many directions \cite{steane_flag,topological, ion_topological,cyclic,trivalent, capped,ewp,4, gkp, magic, magic2,cert,veri}, and has enabled several recent experiments \cite{realization, 5q, shuttling, diamond, 2qubit_ex}, allowing for realizations of fault-tolerance using current hardware in some cases.

In particular, flag gadgets have been shown to apply to any stabilizer code \cite{3}. In architectures that allow for relatively fast qubit measurement and reset, syndrome extraction can be performed fault-tolerantly using only a constant number of ancilla qubits \cite{3}. However, when qubit measurement and reset is slow or unavailable, this construction uses a number of ancilla qubits that scales linearly with the weight of the syndrome to be extracted, similar to conventional methods such as Shor's method \cite{sh}. 
Most relevant to this work, it has been shown that flag gadgets can be used for fault-tolerant syndrome extraction of any distance 3 code, using only a number of ancilla qubits logarithmic in the weight of the stabilizer, without requiring fast qubit measurement/reset \cite{4}. 

An open question relating to flag gadgets is whether a qubit saving over conventional methods can be realized for general codes, without restricting the form or distance, and without requiring fast measurement and reset.
We answer this open question affirmatively and constructively.  

In this work, we develop a framework to use classical codes in designing flag gadgets.
This framework protects the qubit used for syndrome extraction with the parity checks used by a classical code with the same distance as the quantum code.  
The measurement of each flag qubit used corresponds to the result of one of the parity checks in the classical code.
The resulting flag qubit measurements can then be used to locate propagating errors on the syndrome, in analogy to the way in which parity checks for a classical code locate errors on a codeword.
We show that locating propagating errors on the syndrome is sufficient to ensure fault-tolerant syndrome extraction. In addition, we show how the parity checks can be approximately implemented under physically relevant constraints such as limitations on the number of gates that can act on a qubit simultaneously, and that the effect of imprecision can be handled by repeating the parity checks in space.

Using this framework, we present a construction that can be applied to any stabilizer code, with arbitrary distance. For a distance $d=2t+1$ quantum code, this construction uses the parity check matrix for the distance $d$ BCH code, repeated in space $d$ times. The BCH code uses only $t \ceil{\log_2 (w + 1)}$ parity checks for a $w$ bit codeword, so our construction uses only $(2t+1)t \ceil{\log_2 w}$ flag qubits for extracting a weight $w$ syndrome of a distance $d$ quantum code with slow reset, which is an exponential saving comparing to the conventional methods that use $O(w)$ many qubits.

Additionally, we show that we can apply our framework to a sequence of stabilizer measurements, instead of a single stabilizer measurement. 
Because of the logarithmic scaling of the proposed method, this technique allows for constructions that use vastly fewer qubits than conventional methods, when qubit measurement and reset is relatively slow. In particular, we show that in the absence of qubit reset, our proposed scheme can extract the syndrome of any quantum Low-Density Parity-Check (qLDPC) code \cite{breuckmann2021ldpc} with exponentially fewer ancilla qubits compared to  Shor style syndrome extraction.

The mathematical framework proposed in this work provides a systematic approach to design fault-tolerant gadgets, by enabling computer-assisted search to optimize the resources. We provide many examples of optimized small gadgets potentially suitable for near-term experiments. In addition, we provide three algorithms to decode the flags, i.e., to identify the locations of errors based on the flag patterns. We also discuss the savings provided using our constructions when the qubit reset is available but is slow compared to the two-qubit gates.

Our framework focuses on protecting the measurement of one or more stabilizer measurements against propagating errors. Because of this focus, our construction can straightforwardly replace any instance of Shor syndrome extraction, without requiring modification of the rest of the error correction procedure and therefore allows our construction to introduce qubit savings for alternative fault-tolerant error correction procedures \cite{bys, encsyn, adaptive_syn}.

The organization of this paper is as follows.
In Section~\ref{sec:flag_gadgets} we outline the main ideas of the flag qubit paradigm.
In Section~\ref{sec:FT_req} we  introduce our framework to represent flag gadgets and syndrome extraction circuits using a binary matrix and state the fault-tolerance requirements in this language.
In Section~\ref{sec:ideal} we first present the intuition in the design of our framework under some simplifying assumptions on the connectivity of the device.
In Section~\ref{sec:phys_const} we remove the simplifying assumption and impose physically relevant constraints on the connectivity of the device and provide a proof that our construction is fault-tolerant.  
The cost analysis is provided in Section~\ref{sec:cost} and in Section~\ref{sec:decoding} we discuss alternative decoding schemes.
Section~\ref{sec:multi_syn} outlines a technique to apply our construction to multiple syndrome extraction circuits as if they were one, larger, syndrome extraction circuit, and Section~\ref{sec:multi_syn_resource} analyzes the resources required by this technique, as opposed to flagging each measurement separately.
In Section~\ref{sec:sims} we present the results of numerical simulations verifying fault-tolerance for certain codes,  in addition to providing examples of computer-assisted designs of syndrome measurements for small codes. 
Section~\ref{sec:reset_time} analyzes the trade-offs between our construction and other methods for fault-tolerant syndrome extraction as measurement and reset times vary.


\section{Flag Gadgets}\label{sec:flag_gadgets}
The idea of using flag gadgets is to attach a simple quantum circuit to the syndrome qubit, such that we can identify the location of low-weight errors on the syndrome and correct the propagated errors on the data qubits.

This idea is illustrated in Figure~\ref{fig:flag_intro} using a simple example to measure $X^{\otimes 3}$. Flag qubits are connected to the syndrome qubit by CNOT gates.  Given the structure of the circuit, there are two types of faults we need to consider: syndrome faults and flag faults. Syndrome faults have nontrivial support on the syndrome qubit, while flag faults have nontrivial support on a flag qubit. The errors on the syndrome qubit can spread to data qubits. Faults with trivial support on the syndrome qubit do not spread to data but can cause a ``fake" flag pattern, which can trigger a correction step that introduces more errors on the data. The flags should identify syndrome faults while remaining resistant to flag faults.

If a flag qubit produces $-1$ upon being measured, we know that an error on the syndrome qubit occurred before an odd number of CNOTs connecting the syndrome qubit to this flag qubit (or that a flag error occurred). We can then think of a pair of CNOTs connected to a single flag qubit as a ``flag'', which causes the flag qubit to switch sign whenever an error occurs on the syndrome qubit in the region it flags (the region between the two CNOTs). This is illustrated in Figure~\ref{fig:flag_intro}. 
The error on the syndrome qubit is then inferred from the pattern of $+1$ and $-1$ given by the flag qubits (i.e. from the \emph{flag pattern}).

Note that flag gadgets have been used to enable two distinct error correction strategies. Flag gadgets were first used to protect error correction using \emph{adaptive} syndrome extraction, where the choice of stabilizers measured is conditional on the results of the previous flag and stabilizer measurements \cite{1, cyclic, capped, bridge, steane_flag}. However, subsequent generalizations have removed this requirement and instead focus on inferring errors during a static set of stabilizer measurements \cite{3, 4}. It is the latter picture that we focus on.



\begin{figure}
    \includegraphics{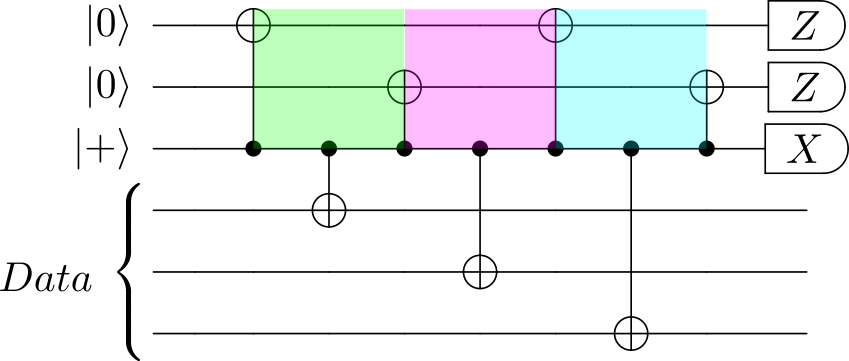}
    \caption{\label{fig:flag_intro} Illustration of a flag qubit circuit for measuring $X^{\otimes 3}$. The two qubits prepared in the $\ket 0 $ state are the flags and the qubit prepared in the $\ket +$ state is the syndrome qubit. If an error that can propagate to the data (an $X$ error on the syndrome) occurs in the green (leftmost) region, only the top flag will produce a $-1$ result when measured. If an error occurs in the violet (middle) region, both flags will produce $-1$. If an error occurs in the aqua (rightmost) region, only the second flag will produce $-1$. Therefore, considering only errors on syndrome qubit, using the flag patterns we can distinguish and correct errors that propagate to the data.}
\end{figure}

The other consideration in designing flag gadgets is the qubit reset time. Some flag gadget constructions, such as  \cite{3, 5}, assume that flag qubits can be reused within one stabilizer measurement. This requires that qubit measurement and reset are relatively fast, or requires idling while qubits are measured and reset (allowing more errors to be introduced). Our construction in general does not require qubits to be reset quickly, but may benefit from a fast reset. The resources required are analyzed for different reset timescales in Section~\ref{sec:reset_time}.

The idea of using flag gadgets has been extended in various directions. Flag gadgets have been applied to measuring the syndrome for the $7$ qubit Steane code \cite{steane_flag}, topological and subsystem codes \cite{topological, ion_topological}, cyclic CSS codes \cite{cyclic}, color codes \cite{trivalent, capped}, concatenated codes \cite{ewp}, state preparation \cite{4, gkp, magic, magic2}, entanglement certification \cite{cert}, and circuit verification \cite{veri}. 
The technique has also been applied to schemes for reducing the number of stabilizer measurements required for fault-tolerance \cite{brown}, and to measuring stabilizers for a distance 3 CSS code in parallel \cite{liou}.
Flag gadgets have been shown to be equivalent to encoding the ancilla qubit as a logical ancilla in certain cases \cite{bridge}.
Using flag gadgets to protect against hardware-specific noise models has also been explored \cite{rydberg, ion, damping, crosstalk}.

\section{Framework for Describing Flag Gadgets}\label{sec:FT_req}
In this section, we introduce notation to describe the flag gadgets using certain binary matrices and state the fault-tolerance requirement using this notation.


We assume that the stabilizer $\sigma_1 \ldots \sigma_w$ is measured by performing a controlled $\sigma_i$ gate on the $i-$th data qubit with the control prepared in the $\ket +$ state, then measuring the control (syndrome) qubit in the $X$ basis. Then the only errors on the syndrome that can propagate to the data from the syndrome qubit are $X$ errors. In examples and simulations presented in this paper, we assume that the stabilizer is of the form $X^{\otimes w}$, but the results can be directly applied to measuring general stabilizer operators. 

Flag gadgets are used to identify where on the syndrome qubit $X$ errors occurred, from which we can deduce the propagated error as a function of the form of the stabilizer.

Since we only need to identify the location of $X$ errors on the syndrome qubit, we describe the syndrome error $e_s$ by a binary column vector of height $l$ and weight $t_s$, where $ l $ is the number of possible locations for an error (which depends on the weight of the syndrome and on the flag construction).
A key ingredient in specifying the fault-tolerant property of a circuit, and also in our constructions, is a description of how error propagates from the syndrome onto flag qubits. 
\begin{definition}
Let $F$ be the matrix that describes how the syndrome error propagates to the flag qubits, i.e. $F e_s$ is a binary column vector with $(Fe_s)_i = 1$ iff an odd number of errors from the syndrome qubit propagate to flag qubit $i$.
\end{definition}
Similarly, describe the flag error $e_f$ by a binary vector column of height $f$ and weight $t_f $, corresponding to the errors occurring on the flag qubits, where $ f $ is the number of flag qubits a construction uses. Since errors on the flag qubits do not propagate to the data, only the parity of the error on a flag qubit needs to be considered, justifying the definition of $e_f$ as a binary vector. Therefore measuring the flag qubits reveals $F e_s \oplus e_f$. Based on the result of flag measurements we infer the data qubits we need to apply a correction to; we call the corresponding map $R$. 

We also define a matrix describing how errors propagate to the data.
\begin{definition}
    Define $D$ such that $(D e_s)_i = 1$ iff an odd number of syndrome errors propagated to data qubit $i$.
\end{definition}

The goal of using flag qubits is to fault-tolerantly extract a syndrome using a few qubits. In this work, we use fault-tolerance in the strong sense \cite{FTEC,adaptive_syn}.

\begin{definition}
Syndrome extraction is said to be (strongly) \emph{fault-tolerant} if the following conditions hold: 
\begin{enumerate}
    \item If the data is a weight $r$ Pauli correction from codeword $c$ before syndrome extraction and $s$ faults occur during syndrome extraction and $r + s\leq t$ after syndrome extraction the data is at most a weight $r + s$ Pauli correction from codeword $c$
    \item If the data is a weight $r$ Pauli correction from codeword $c$ before syndrome extraction and $s$ faults occur during syndrome extraction, after syndrome extraction the data is at most a weight $r + s$ Pauli correction from \emph{any} codeword $c'$.
\end{enumerate}
\end{definition}

Given these definitions, we succeed in fault-tolerantly identifying errors on the syndrome qubit if
\begin{equation}
    D e_s \oplus R(F e_s \oplus e_f)
    \label{eq:FT}
\end{equation}
has weight at most $t_f+t_s=k$, for any $k\leq t$, up to a stabilizer. 


We can describe a flag gadget circuit diagram using a $(f+1)\times l$  binary matrix $C$ as follows. The zeroth row of $C$ specifies the location of the CNOTs connecting the syndrome qubit to the data qubits. This row has $w$ many non-zero elements. The remaining $f$ rows describe the location of CNOTs connecting the syndrome qubit to flag qubits.  The matrix element $C_{i,j}=1$ iff there is a CNOT between the syndrome qubit and $i$-th flag qubit at location $j$. 
This is summarized in Figure~\ref{fig:circuit_matrix_to_circit}.

\begin{figure*}[t!]
    \centering
    {\def\quad{\hspace{1.1em}}
    \hspace{-2.9em}$\bordermatrix{
        ~ & \cr
         FLAG1&1 & 0 & 0 & 0 & 0 & 0 & 0 & 1 & 0 & 0 & 1 & 0 & 0 & 1 & 0 & 0 & 0 & 0 & 0\cr
        FLAG2&0 & 0 & 1 & 0 & 0 & 1 & 0 & 0 & 0 & 1 & 0 & 0 & 0 & 0 & 0 & 0 & 0 & 0 & 1\cr
        FLAG3&0 & 0 & 0 & 0 & 1 & 0 & 0 & 0 & 0 & 0 & 0 & 0 & 1 & 0 & 0 & 1 & 0 & 1 & 0\cr
       DATA & 0 & 1 & 0 & 1 & 0 & 0 & 1 & 0 & 1 & 0 & 0 & 1 & 0 & 0 & 1 & 0 & 1 & 0 & 0
}$ }\\
    \hspace{3.2em}
\begin{tabular}{c}
    \vspace{-.6em}\\
    \Qcircuit @C=1em @R=0.3em @!R{
    \lstick{FLAG1} & \targ    & \qw       & \qw      & \qw       & \qw      & \qw      & \qw       & \targ    & \qw       & \qw      & \targ    & \qw       & \qw      & \targ    & \qw       & \qw      & \qw       & \qw      & \qw  & \measureD{Z}\\
\lstick{FLAG2} & \qw      & \qw       & \targ    & \qw       & \qw      & \targ    & \qw       & \qw      & \qw       & \targ    & \qw      & \qw       & \qw      & \qw      & \qw       & \qw      & \qw       & \qw      & \targ  & \measureD{Z} \\
\lstick{FLAG3} & \qw      & \qw       & \qw      & \qw       & \targ    & \qw      & \qw       & \qw      & \qw       & \qw      & \qw      & \qw       & \targ    & \qw      & \qw       & \targ    & \qw       & \targ    & \qw  &\measureD{Z}   \\
\lstick{SYN} & \ctrl{-3} & \ctrl{1} & \ctrl{-2} & \ctrl{1} & \ctrl{-1} & \ctrl{-2} & \ctrl{1} & \ctrl{-3} & \ctrl{1} & \ctrl{-2} & \ctrl{-3} & \ctrl{1} & \ctrl{-1} & \ctrl{-3} & \ctrl{1} & \ctrl{-1} & \ctrl{1} & \ctrl{-1} & \ctrl{-2} & \measureD{X}\\
\lstick{DATA} & \qw      & \qw       & \qw      & \qw       & \qw      & \qw      & \qw       & \qw      & \qw       & \qw      & \qw      & \qw       & \qw      & \qw      & \qw       & \qw      & \qw       & \qw      & \qw  & \qw
}
\vspace{.2em}     
\\  
\end{tabular}
    \caption{\label{fig:circuit_matrix_to_circit} Illustration of how a circuit matrix corresponds to a physical circuit.}
\end{figure*}
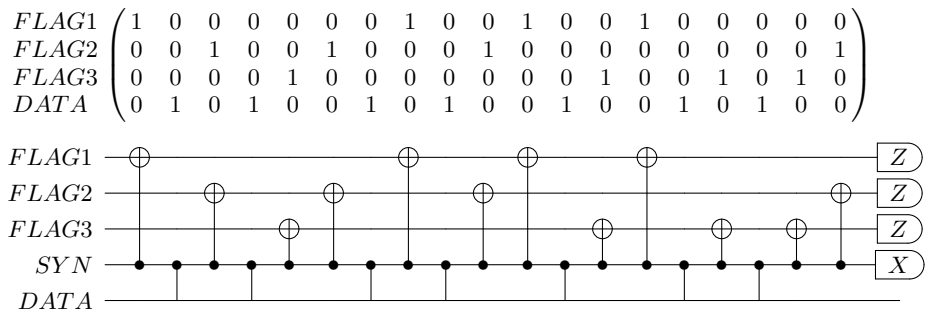


Given such a circuit matrix $C$ with dimensions $m \times n$, we can define the physical parity check matrix $F^{c}$. This matrix $F^{c}$ encodes how the syndrome errors propagate to the flag qubits, just as $F$ does, but corresponds to a concrete physically implementable circuit $C$. The matrix $F^{c}$ is $m - 1 \times n$ and has as element $(i, j)$ the sum $ \sum_{k = n}^j C_{i + 1, k}$ (over $\Z_2$). 

In this definition, we do not keep the row that encodes the location of the CNOTs connecting to the data. However, other than neglecting the data qubits, $F^{c}$ completely characterizes $C$, so we can work with the two representations interchangeably. In our construction, we start with a matrix $F$ based upon some classical code and modify it to produce $F^c$ so that $C$ (the circuit matrix) satisfies some set of physically motivated constraints (described in Section~\ref{sec:phys_const}).

Analogously, we define $D^c$ as the matrix that describes how errors on the syndrome qubit propagate to the data relative to a given circuit $C$.

The fault-tolerant requirement of the syndrome measurement circuit can be summarized (in a similar form to Eq.~\ref{eq:FT}) using the circuit diagram matrix $C$ by requiring
\begin{multline}\label{eq:FTcondition:circuit}
    \min (
    \abs{D^c e_s \oplus R(F^c e_s \oplus e_f)},\\
    w - \abs{D^c e_s \oplus R(F^c e_s \oplus e_f)}
    ) \leq k
\end{multline}
for $ k := t_f + t_s \leq t $, where we have included the fact that the syndrome we measure is a stabilizer. Of course in the case that $F^c = F$ and $D^c = D$ this reduces to Eq.~\ref{eq:FT}.

\section{Flag Gadgets based on Classical Codes}\label{sec:ideal}
We now show that by constructing the flag gadgets according to the parity checks of appropriate classical codes we can fault-tolerantly identify and correct syndrome faults in measuring any stabilizer. For simplicity, in this section, we assume that multiple CNOTs can be performed simultaneously and only suffer weight $1$ faults -- we will remove these assumptions in Section~\ref{sec:phys_const}.

\subsection{Flagged Syndrome Extraction for a Distance \texorpdfstring{$3$}{3} Code}\label{sec:ideal_t=1}
We start with an illustrative example, in which we show how to fault-tolerantly measure a weight $w + 1$ stabilizer of a distance $3$ code using $3 \ceil{\log_2 w}$ flag qubits and one syndrome qubit. Consider $F = \begin{pmatrix} H\\\hline H\\ \hline H\end{pmatrix}$ where $H$ is the parity check matrix for the Hamming code on $w$ (physical) bits. Imposing no physical constraints, we let $F^c = F$. The corresponding circuit is then uniquely identified by specifying one data CNOT per column of $F$. An example circuit for $w = 8$ is shown in Figure~\ref{fig:ideal_hamming}.

\begin{figure}
    \centering
    \includegraphics[scale=0.75]{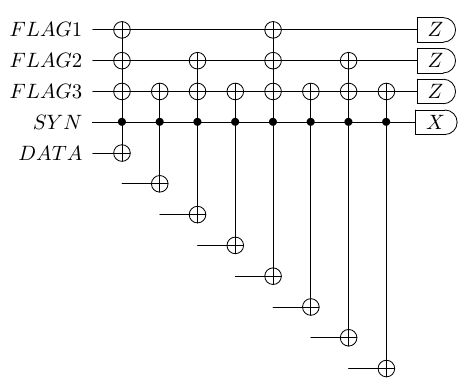}
    \caption{The parity check matrix for the $[7,4, 3]$ Hamming code implemented with CNOTs, with spaces between data CNOTs corresponding to physical bits of the codeword.}
    \label{fig:ideal_hamming}
\end{figure}

Clearly, if an error occurs on the syndrome qubit between data CNOTs $i$ and $i + 1$, the flag pattern produced is the same (replacing $1$s with $0$s and $-1$s with $1$s) as the result of the multiplication $ F e$, where $e$ is a binary vector with a single $1$ as the $i-$th entry. As such, we can fault-tolerantly identify the location of a single fault on the syndrome qubit, up to a stabilizer. The proof follows.
\begin{proof}
    Suppose the flag pattern is given by $P$.
    Consider two errors $e_1, e_2$ which produce the flag pattern $P$. We will show that the errors must propagate to the data identically (i.e. $De_1 = De_2$) up to a stabilizer. 

    For the purposes of this proof, call the trivial error (the no-error case) a syndrome error.
    Syndrome errors produce a flag pattern of weight either $0$ (corresponding to an error before or after all CNOTs composing the flag gadget, or the trivial error) or at least $3$, since columns of $F$ have at least $3$ nonzero entries (since columns of $H$ have at least $1$ nonzero entry).
    
    Consider first the case in which $e_1$ is a flag error, while $e_2$ is a syndrome error. A flag error produces a flag pattern with exactly one nonzero entry, so the flag patterns can not be the same and this case is impossible.

    Next, suppose $e_1$ and $e_2$ are both flag errors. Then they do not propagate to the data.

    Now suppose $e_1, e_2$ are both syndrome errors. If the flag pattern has zero nonzero entries, both errors propagate to a stabilizer. If the flag pattern is of weight at least $3$, both errors must be between the same pair of data CNOTs, since the columns of $F$ are distinct, and hence only an error between data CNOTs $i, i + 1$ produces column $F_i$. So $De_1 = De_2$.

    This shows that any two errors with the same flag pattern propagate to the same error on the data, so we can assign a unique correction to any flag pattern.
\end{proof}
\subsection{Flagged Syndrome Extraction for a distance \texorpdfstring{$d$}{d} code}\label{sec:ideal_gen_t}
The picture is similar when we consider more than $1$ error occurring during syndrome extraction. For a distance $d$ code, we use the flag gadget based on the binary matrix $F = \begin{pmatrix}
H\\
\hline
\ldots\\
\hline
H
\end{pmatrix}$ consisting of $d$ repetitions of $H$, where $H$ is a parity check matrix for the distance $d$ BCH code \cite{bc, hoc} (see Appendix~\ref{appendix:BCH} for a summary of the BCH code parity check matrix). Again we consider two errors $e_1$ and $e_2$ which produce the same (fixed) flag pattern $P$ and show that $De_1 = De_2$ up to a stabilizer.
\begin{proof}
    Since $H$ is the parity check matrix for a distance $d = 2t + 1$ code, $\sum_{k = 1}^{0 \leq n \leq 2t} H_k \not = 0$, where $H_k$ is the $k-$th column of $H$. This implies that $\abs{\sum_{k = 1}^{0 \leq n \leq 2t} F_k} \geq 2t + 1$. Consequently, any flag pattern $P$ has exactly $1$ associated set of columns $\{F_k\}_{0 \leq k \leq t}$, up to including the all-zeros column (corresponding to an error before or after all CNOTs), such that $\abs{P \oplus \sum_k F_k} \leq t$ (by the triangle inequality).
    Each column $F_k$ included implies that an odd number of syndrome errors have occurred between data CNOTs $k$ and $k + 1$. This implies that any two errors $e_1$ and $e_2$ with a different parity of syndrome errors between data CNOTs $i, i + 1$ for any $i$ (i.e. any two errors such that $De_1 \not = De_2$) have distinct flag patterns, and we can again assign a unique correction to each flag pattern.
\end{proof}
This shows that in the case where any number of CNOTs can act simultaneously, our construction is capable of perfectly identifying $De$ for any $e$ composed of syndrome and flag errors. In the next section, we show that in the absence of the ability to perform multiple CNOTs simultaneously, our construction is still able to approximately infer $De$, and that the approximation is fault-tolerant.
\section{Imposing Physical Constraints}\label{sec:phys_const}
When implementing a quantum error correcting code on hardware, there are many physical constraints to work around; for instance, connectivity constraints, constraints on how many qubits can be involved in a gate, and constraints on how many gates can act simultaneously. 

We consider a physically motivated constraint which we believe to be one of the more difficult ones to overcome: two CNOTs with a common control or target qubit cannot be applied simultaneously. We show that this constraint does not affect our construction. First, in Section~\ref{sec:unfolding} we provide a method to modify a given parity check matrix so that the corresponding circuit does not use more than one CNOT at once, then in Section~\ref{sec:general_gadgets} we show that this modification still allows for fault-tolerantly identifying syndrome faults by repeating a classical parity check matrix.

\subsection{Unfolding a parity check matrix}\label{sec:unfolding}
In this section, we describe an ``unfolding" procedure to construct $C$, or equivalently $F^{c}$, matrices which satisfy Eq.~\ref{eq:FTcondition:circuit} when starting with an $F$ matrix of a certain form, but which do not require performing multiple CNOTs simultaneously.
We will describe a form for $F$ such that this unfolding procedure is sufficient to satisfy Eq.~\ref{eq:FTcondition:circuit} in Section~\ref{sec:general_gadgets}. We assume $F$ is some number of parity check matrices for a BCH code \cite{hoc,bc} stacked on top of each other. See Appendix~\ref{appendix:BCH} for a brief summary of BCH codes.

Given an $F$ matrix, we first append a zero column to the right and left sides of the matrix. Then we add columns between adjacent columns of the appended matrix such that two consecutive columns differ in exactly one element. More precisely, we define a new matrix $F^c$ such that
\begin{align} 
\begin{cases}
F^c_{i,0}=F_i &\quad \forall i \\
F^c_{i,k}=F_i \oplus [F_i\oplus F_{i+1}][k] & \quad \forall i,k
\end{cases}
\end{align}	
Where the indices $i, k$ in $F^c_{i,k}$ together specify the column and $[F_i\oplus F_{i+1}][k]$ denotes the first $k$ non-zero elements of the column $F_i\oplus F_{i+1}$.
The matrix $F^c$ then has the form
\begin{widetext}
\begin{align}
F^c= \begin{pmatrix}
F_{1}, & F_{1}\oplus [F_{1}\oplus F_{2}][1], & \ldots \vert &  F_{2}, & F_{2}\oplus [F_{2}\oplus F_{3}][1], & \ldots \vert F_{m}, & F_{m}\oplus [F_{m}\oplus F_{m+1}][1], & \ldots \end{pmatrix}.
\end{align}
Each column of the unfolded parity check matrix is a stack of columns of $H$, with at most one subcolumn being neither a column of $H$ or the zero vector.
\end{widetext}

The matrix $F^c$ and the location of data CNOTS uniquely specify $C$ by just considering the difference of two consecutive columns of $F^c$. See Figure~\ref{fig:unfolding} for a small example.

\begin{figure*}[t!]
    \centering
    \begin{align*}
        F &= 
        \left(\begin{array}{c G G}
                   & 1 & 0 \\
            \cdots & 0 & 1  \\
                   & 1 & 1 \\
        \end{array}\right)
   \xmapsto{\text{appending } \vec 0}
   \left(
        \begin{array}{c G G B}
                   & 1 & 0 & 0 \\
            \cdots & 0 & 1 & 0  \\
                   & 1 & 1 & 0\\
        \end{array}\right)
        \left(
        \xmapsto{\text{unfolding}}
        \begin{array}{c G B G B B}
                   & 1 & 1 & 0 & 0 & 0 \\
            \cdots & 0 & 1 & 1 & 1 & 0  \\
                   & 1 & 1 & 1 & 0 & 0\\
        \end{array}\right) = F^c
\end{align*}
\begin{align*}
    C &= \begin{pmatrix}
              & 1& 0 & 0 & 1 & 0 & 0 & 1 &\\
        \cdots& 0& 0 & 1 & 0 & 0 & 0 & 0 &\\
        \cdots& 0& 1 & 0 & 0 & 0 & 1 & 0 & \\
              & 0& 0 & 0 & 0 & 1 & 0 & 0 &\\
\end{pmatrix}
    \end{align*}
    \caption{Example of unfolding procedure with columns of the original parity check matrix in green and transition columns in blue.}
    \label{fig:unfolding}
\end{figure*}

\subsection{General Fault Tolerant Flag Gadgets}\label{sec:general_gadgets}
We now use the unfolding method presented in Section~\ref{sec:unfolding} to construct general fault-tolerant flag gadgets. The proof is analogous to the proof in Section~\ref{sec:ideal_gen_t}, and follows from the observation that the independence of columns of $H$ guarantees that any two errors with the same flag pattern must be very similar.

Given an arbitrary $t$, define $F$ as the stack of $2t + 1$ parity check matrix $H$ of a distance $2t + 1$ error correcting code (for instance, the BCH code) on top of each other, i.e., define \[F = \begin{pmatrix}
    H_1 \\
    \hline
    \vdots\\
    \hline
    H_{2t + 1}
\end{pmatrix}.\]

As in Section~\ref{sec:ideal_t=1}, define $D$ to correspond to the physical implementation in which a data CNOT comes between each pair of columns of $F$.

Note that in this definition of $ D $, we do not distinguish between an error directly before a data CNOT and an error directly after a data CNOT. This is consistent with the fact that we can not distinguish these two errors by any flag construction and does not impact fault tolerance since misidentifying an error directly before a data CNOT as an error directly after a data CNOT (or vice versa) leads to at most one error on the data (which is allowable since the original number of errors is at least one).

\begin{lemma}\label{lem:same_flag_pattern}
Let $e_1, e_2$ be two errors that produce the same flag pattern $P$.
Denoting the vector of all syndrome errors by $e := e_{1, s} \oplus e_{2, s}$, we have $\abs{De_{2i}\oplus De_{2i + 1}}\leq 1$ for $e_j$ the $j-$th nonzero component of $e$.
\end{lemma}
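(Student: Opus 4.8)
The plan is to mirror the idealized argument of Section~\ref{sec:ideal_gen_t}, but to first strip away the transition columns introduced by unfolding so that the distance property of $H$ can be reapplied. Writing out the hypothesis, since $e_1$ and $e_2$ produce the same flag pattern we have $F^c e_{1,s}\oplus e_{1,f}=F^c e_{2,s}\oplus e_{2,f}$, hence $\bigoplus_j F^c_{c_j}=e_{1,f}\oplus e_{2,f}$ for $e:=e_{1,s}\oplus e_{2,s}$, where $c_0<c_1<\cdots$ are the ordered nonzero positions of $e$ and $e_j$ is the single error supported at $c_j$. I work in the fault-tolerant regime, so each of $e_1,e_2$ comes from at most $t$ faults, i.e.\ $t_{1,s}+t_{1,f}\le t$ and $t_{2,s}+t_{2,f}\le t$; in particular the number of syndrome positions is $\wt(e)\le t_{1,s}+t_{2,s}\le 2t$.

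The central observation I would exploit is that the block structure of $F$ lets me recover honest columns of $H$ from the transition columns. Recall $F$ is $2t+1$ stacked copies of $H$, and unfolding flips the nonzero entries of $F_i\oplus F_{i+1}=(H_i\oplus H_{i+1})$ stacked $2t+1$ times \emph{one at a time, copy by copy}. Consequently every transition column of $F^c$ agrees with a genuine column of $H$ in \emph{all but one} of the $2t+1$ copies: the completed copies read $H_{i+1}$, the untouched copies read $H_i$, and exactly one copy is partial. I would then restrict the equation above to a single copy $\beta\in\{0,\dots,2t\}$. Call a copy \textbf{contaminated} if some $c_j$ is partial in it, and \textbf{spoiled} if $e_{1,f}\oplus e_{2,f}$ is nonzero there. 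Each syndrome position is partial in at most one copy, so there are at most $t_{1,s}+t_{2,s}$ contaminated copies, and at most $t_{1,f}+t_{2,f}$ spoiled ones; the total is at most $(t_{1,s}+t_{1,f})+(t_{2,s}+t_{2,f})\le 2t<2t+1$, so at least one \emph{good} copy $\beta^\ast$ exists that is neither contaminated nor spoiled.

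Restricting to $\beta^\ast$, the equation collapses to $\bigoplus_j H_{b_j}=0$, where each $b_j\in\{g_j,g_j+1\}$ is the index of an honest $H$-column, with $g_j$ the region (real-column) index at position $c_j$, nondecreasing in $j$. Since there are $\wt(e)\le 2t$ terms and $H$ has distance $2t+1$ (so any $2t$ of its columns are independent, a dependency requiring at least $2t+1$ columns), this relation forces every column index to appear an even number of times among the $b_j$. The final step is combinatorial: because $g_j$ is nondecreasing and $b_j\in\{g_j,g_j+1\}$, a jump $g_{j+1}-g_j\ge 2$ makes the value sets $\{g_j,g_j+1\}$ and $\{g_{j+1},g_{j+1}+1\}$ disjoint, so no column index is shared between positions $\le j$ and $>j$; even multiplicity then forces the prefix $b_0,\dots,b_j$ to have even length $j+1$, i.e.\ such a jump can occur only for odd $j$. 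Hence for every even $j$—in particular $j=2i$—the number of data CNOTs between $c_{2i}$ and $c_{2i+1}$ is at most one, which is exactly $\abs{De_{2i}\oplus De_{2i+1}}\le 1$.

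I expect the bookkeeping around the transition columns, rather than any single inequality, to be the main obstacle. One must verify that the copy-by-copy flip order genuinely leaves only one partial copy per transition column (so the contamination count is $\le t_{1,s}+t_{2,s}$), and one must pin down the exact relation between the good-copy index $b_j$, the region index $g_j$, and the data-propagation vector $De_j$ so that the parity and $(2i,2i+1)$-indexing in the last step line up with the statement. Once the reduction to the good copy is in place, the distance argument is a direct transcription of Section~\ref{sec:ideal_gen_t}.
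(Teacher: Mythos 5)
Your proposal is correct and follows essentially the same route as the paper's proof: a pigeonhole count over the $2t+1$ stacked copies of $H$ (removing at most one copy per syndrome fault and per flag fault, leaving a ``good'' copy, which is exactly the paper's conditions on the submatrix $H_k$), followed by the distance-$2t+1$ argument forcing even multiplicity of the surviving $H$-columns and hence close pairing of the syndrome errors. Your closing combinatorial step pinning down the $(2i,2i+1)$ indexing is slightly more explicit than the paper's, but the substance is identical.
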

\begin{proof} \label{improvement}
We will show that syndrome errors occur in pairs, with at most one data CNOT between members of the pair. 

 Write $P$ as $e_{1, f} \oplus \sum_{i \in I_1} F_{1, i}^c$ where $I_1$ is the set of indices corresponding to nonzero entries of $e_{1,s}$. This defines the set of columns $F_{1, i}^c$ corresponding to each syndrome error in $e_1$. Decomposing $P$ in terms of $e_2$ defines the set of columns $F_{2, i}^c$ corresponding to each syndrome error in $e_2$.

 Assuming that each $H_k$ has $r$ rows, define $F_{*, i}^k$ as the vector consisting of the $r(k - 1)$-th through $rk$-th entries of $F_{*, i}$, and similarly for $e_{*, f}^k$.
 
 Consider the set of all $H_k$ such that 
\begin{enumerate}
    \item \label{error_form_1}$F_{1, i}^k$ is a column of $H_k$, or is the zero vector,
    \item \label{error_form_2}$F_{2, i}^k$ is a column of $H_k$, or is the zero vector, and
    \item \label{error_form_3}$H_k$ is affected by zero flag errors (i.e. $e_{1, f}^k$ is the zero vector, as is $e_{2, f}^k$).
\end{enumerate}
    We first show that there exists at least one such $H_k$. 
    Start with the set $N = \{H_{k}\}_{k \leq 2t + 1}$. For each column $F_{1, i}$, at most one subcolumn of $F_{1, i}$ is both \emph{not} a column of $H$ and \emph{not} the zero vector.
    After removing all submatrices that do not satisfy \autoref{error_form_1}, we are left considering a set $N'$ of size at least $|N| - \abs{e_{1, s}}$, since $F_{*, i}^k$ is neither a column of $H_k$ or the zero vector for at most one $k$, and $i$ ranges from $1$ to $|e_{1, s}|$.
    Similarly, after removing all submatrices that do not satisfy \autoref{error_form_2}, we are left to consider at least $|N| - \abs{e_{1, s}} - \abs{e_{2, s}}$ submatrices.
    Finally, after removing all submatrices that do not satisfy \autoref{error_form_3}, we are left to consider at least $|N| - \abs{e_{1, s}} - \abs{e_{2, s}} - \abs{e_{1, f}} - \abs{e_{2, f}}$ submatrices. 
    
    Noting that $|N| = 2t + 1$, and that $\abs{e_{1, s}} + \abs{e_{2, s}} + \abs{e_{1, f}} + \abs{e_{2, f}} \leq 2t$, we see that we have at least $2t + 1 - 2t = 1$ submatrix satisfying \autoref{error_form_1}, \autoref{error_form_2} and \autoref{error_form_3}.
    
    We can now let $k$ be any index such that $H_k$ satisfies the three conditions given. 
    We use the columns of $H_k$ to characterize the relation between $e_{1, s}$ and $e_{2, s}$. 

    Recall that $F_{1, i}^k$ and $F_{2, i}^k$ are either columns of $H_k$ or the zero vector by assumption.
    Since $e_1$ and $e_2$ have the same syndrome and $e_{1, f}^k = e_{2, f}^k = 0$ by assumption, we must have $\bigoplus_{i\in I_1} F_{1, i}^k \oplus \bigoplus_{i \in I_2} F_{2, i}^k = 0$.
    Since $H$ has distance $2t + 1$, the sum of any \emph{distinct} $2t$ of its columns is nonzero. Since the sum \emph{is} zero, the columns must be non-distinct. In particular, either $F_{*, i}^k = 0$ or $F_{*, i}^k = F_{*, j}^k$ with a unique $j$ paired to each $i$. 
    By Appendix~\ref{appendix:ordering} we can assume that zero vector columns correspond to errors before or after measurement, which will propagate to a stabilizer. Each pair of errors corresponding to a pair of identical columns must have at most one data CNOT between them, since columns of $H$ are distinct, and each column is expanded to cover exactly one data CNOT. This proves the lemma.
\end{proof}
\begin{lemma} \label{lem:correction}
    Let $S$ be an arbitrary flag pattern.
    Consider the set of all errors that produce the given flag pattern, $E := \{ e = e_s + e_f : F^c e_s \oplus e_f = S, |e_s| + |e_f| \leq t\}$. Then for any $e_1, e_2 \in E$ such that $\abs{e_1} \leq \abs{e_2}$ we have $\abs{De_{1, s} \oplus De_{2, s}}\leq \abs{e_{2, s}} \leq \abs{e_{2}}$.
\end{lemma}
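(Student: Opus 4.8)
The plan is to reduce the whole statement to a single net error and then apply Lemma~\ref{lem:same_flag_pattern}. Since the propagation map $D$ is linear over $\Z_2$, I would first write $De_{1,s}\oplus De_{2,s}=D(e_{1,s}\oplus e_{2,s})=De$ with $e:=e_{1,s}\oplus e_{2,s}$, which is precisely the net syndrome error appearing in Lemma~\ref{lem:same_flag_pattern}. Because $e_1,e_2\in E$ share the same flag pattern $S$, that lemma applies directly, and the problem collapses to bounding the single quantity $\abs{De}$.

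Next I would import the structural conclusion of Lemma~\ref{lem:same_flag_pattern}: the nonzero components of $e$ group into close pairs $(e_{2i},e_{2i+1})$, each separated by at most one data CNOT, so $\abs{De_{2i}\oplus De_{2i+1}}\le 1$. Expanding $De=\bigoplus_i\bigl(De_{2i}\oplus De_{2i+1}\bigr)$ and using subadditivity of the Hamming weight under $\oplus$ gives $\abs{De}\le(\text{number of pairs})=\tfrac12\abs{e}$, where $\abs{e}$ is even by the even-multiplicity argument of Lemma~\ref{lem:same_flag_pattern}. From here $\abs{De}\le\tfrac12\abs{e_{1,s}\oplus e_{2,s}}\le\tfrac12\bigl(\abs{e_{1,s}}+\abs{e_{2,s}}\bigr)\le\tfrac12\bigl(\abs{e_1}+\abs{e_2}\bigr)\le\abs{e_2}$, where the last step uses the hypothesis $\abs{e_1}\le\abs{e_2}$; since $\abs{e_{2,s}}\le\abs{e_2}$ the trailing inequality of the statement is automatic.

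The step I expect to be the main obstacle is tightening the intermediate bound from $\abs{e_2}$ to the sharper $\abs{e_{2,s}}$ claimed in the statement. The clean subadditive count only controls half of the \emph{total} number of syndrome faults, i.e.\ $\tfrac12(\abs{e_{1,s}}+\abs{e_{2,s}})$, so to reach $\abs{e_{2,s}}$ one cannot count pairs globally but must \emph{charge} each weight-contributing pair to a distinct syndrome fault of $e_2$. Concretely, I would sort the net errors by which column of the clean submatrix $H_k$ supplied by Lemma~\ref{lem:same_flag_pattern} they fall into; within each such region the even-multiplicity relation forces the number of $e_1$-only and $e_2$-only net errors to agree in parity, and I would try to pair an $e_1$-only error with an $e_2$-only error in the same region so that the resulting data error is charged to the latter. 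Establishing that such an injective charging into the syndrome faults of $e_2$ always exists---and that the hypothesis $\abs{e_1}\le\abs{e_2}$ supplies enough $e_2$ faults to absorb any remaining same-side pairs---is the delicate part; it is here, rather than in the linear-algebraic reduction or the invocation of Lemma~\ref{lem:same_flag_pattern}, that the real content of the proof lies. (If this charging cannot be made injective in the worst case, the robust bound $\abs{De}\le\abs{e_2}$ obtained above is already the form needed for the fault-tolerance argument.)
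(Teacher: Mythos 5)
Your proposal is correct and takes essentially the same route as the paper: both proofs invoke Lemma~\ref{lem:same_flag_pattern} to pair up the nonzero components of $e_{1,s}\oplus e_{2,s}$, note that each pair straddles at most one data CNOT and so contributes at most one to $\abs{De_{1,s}\oplus De_{2,s}}$, and count pairs to obtain $\tfrac{1}{2}\left(\abs{e_{1,s}}+\abs{e_{2,s}}\right)\le\abs{e_2}$ --- the paper merely organizes this count as $l+\tfrac{k-l}{2}+\tfrac{n-l}{2}=\tfrac{n+k}{2}$ by separating cross-pairs from same-side pairs, which is the same arithmetic. Your worry about the sharper intermediate bound $\abs{e_{2,s}}$ is well-founded but immaterial: the paper's own computation likewise delivers only $\tfrac{n+k}{2}\le n=\abs{e_2}$, and Corollary~\ref{cor:main} uses only $\abs{De_{1,s}\oplus De_{2,s}}\le\abs{e_2}$, so your fallback bound is exactly what is proved and needed.
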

\begin{proof}
Let $e_{1}, e_{2} \in E$ have syndrome error weights $\abs{e_{1, s}} = k$,  $\abs{e_{2, s}} = n$ with $k \leq n$. 
 
We know that each component error (except for possibly one in the boundary, which does not affect $De$ anyway) has a paired error with at most one data CNOT in between by the previous lemma. Consider all of the component errors of $e_{1, s}$ which have their pair in $e_{2, s}$. If there are $l$ of these, $De_{1, s} \oplus De_{2, s}$ has weight at most $l + \frac{n - l}{2} + \frac{k - l}{2} = l - \frac{n + k - 2l}{2} = \frac{n + k}{2} \leq \frac{2n}{2} = n$. 

We get the term on the left-hand side by considering the error $e_{2, s}$ and applying the error $e_{1, s}$ as the correction. Then we note that:
\begin{enumerate}
    \item When we apply $e_{1, s}$ to the syndrome qubit the $l$ corrections that have their pair in $e_{2, s}$ each introduce at most one error on the data (corresponding to the data CNOT possibly in the middle of the pair).
    \item The $k - l$ errors that are not part of $e_{2, s}$ but which are included in the correction consist of pairs, and so their correction introduces at most $\frac{k - l}{2}$ errors.
    \item The ${n - l}$ errors which are part of $e_{2, s}$ which we do not correct are pairs (separated by at most one data CNOT), and hence introduce at most $\frac{n - l}2$ errors, which stay on the data after correction.
\end{enumerate}
This proves the lemma.
\end{proof}
\begin{corollary} \label{cor:main}
$F^c$ satisfies condition~\ref{eq:FTcondition:circuit}.
\end{corollary}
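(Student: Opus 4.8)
The plan is to exhibit an explicit decoder $R$ and then read off the inequality directly from Lemma~\ref{lem:correction}. The natural choice is minimum-weight decoding: for each observed flag pattern $S$ I would set $R(S) := D^c e_{\min,s}$, where $e_{\min} = e_{\min,s} \oplus e_{\min,f}$ is a minimum-weight element of the preimage set $E := \{\, e = e_s \oplus e_f : F^c e_s \oplus e_f = S,\ t_s + t_f \leq t \,\}$, breaking ties by any fixed deterministic rule. Since $R$ depends only on the measured flag pattern and not on the actual fault, it is a legitimate correction map, which is the only freedom available in Eq.~\ref{eq:FTcondition:circuit}.

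Next I would fix an arbitrary fault configuration $e = e_s \oplus e_f$ with $k := t_s + t_f \leq t$ and let $S = F^c e_s \oplus e_f$ be the flag pattern it produces. Then $e \in E$, so $E$ is nonempty and the minimum-weight preimage obeys $\abs{e_{\min}} \leq \abs{e} = k \leq t$; in particular $e_{\min} \in E$ as well. Applying the correction, the residual data error is $D^c e_s \oplus R(S) = D^c e_{\min,s} \oplus D^c e_s$ (using that in section~\ref{sec:general_gadgets} the matrix $D$ of the lemmas is precisely the $D^c$ of the unfolded construction).

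Now I would invoke Lemma~\ref{lem:correction} with $e_1 = e_{\min}$ and $e_2 = e$, whose weights satisfy $\abs{e_1} \leq \abs{e_2}$ by minimality of $e_{\min}$. The lemma gives $\abs{D^c e_{\min,s} \oplus D^c e_s} \leq \abs{e_s} \leq \abs{e} = k$. Since $\min(a, w - a) \leq a$ for every $a$, this at once yields $\min(\abs{D^c e_s \oplus R(S)},\, w - \abs{D^c e_s \oplus R(S)}) \leq k$, which is exactly Eq.~\ref{eq:FTcondition:circuit}. Notably the stabilizer slack carried by the $w - \abs{\cdot}$ term is not even needed, because the lemma already bounds the raw Hamming weight of the residual by $k$.

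Because the whole argument rests on Lemma~\ref{lem:correction}, there is no genuine obstacle left; the work that mattered was done in proving the two lemmas. The only points requiring care are bookkeeping. First, I would confirm that the tie-breaking rule makes $R$ a single well-defined function of $S$ that never peeks at the true error. Second, I would make sure the weight comparison is oriented correctly: the lemma's bound $\abs{De_{1,s} \oplus De_{2,s}} \leq \abs{e_{2,s}}$ must be read with $e_2$ as the \emph{actual} (and thus larger- or equal-weight) fault and $e_1 = e_{\min}$ as the decoded preimage, so that the residual is controlled by $\abs{e} = k$ rather than by the weight of some other, possibly heavier, element of $E$.
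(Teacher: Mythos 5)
Your proposal is correct and matches the paper's own proof essentially verbatim: both define the recovery map via a minimum-weight element of the preimage set $E$ and then apply Lemma~\ref{lem:correction} with $e_1$ the decoded minimizer and $e_2$ the actual fault. Your added remarks on tie-breaking and on not needing the stabilizer slack are fine but do not change the argument.
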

\begin{proof}
Again letting $S$ be arbitrary, consider the set of all errors that produce $S$, namely $E := \{ e = e_f + e_s : F^c e_s \oplus e_f = S, t_s + t_f \leq t\}$. Then let the correction operator be given by $D^c c$ where $c$ is defined as $\argmin_{e \in E} \abs{e}$. We wish to show that $\abs{D^c e_s \oplus R(F^ce_s \oplus e_f)} \leq t_s + t_f$ for any $e \in E$. 
Rewriting, we wish to show that $\abs{D^c c \oplus D^c e_s} \leq t_s + t_f$. This holds by Lemma~\ref{lem:correction} since by definition $\abs{c} \leq \abs{e}$.
\end{proof}

\subsection{Application to non-CSS syndrome measurements}\label{sec:nonCSS}

Suppose we wish to measure an operator $X^{\otimes n} Z^{\otimes m}$. Note that this applies to the measurement of $Y$ type operators if $n$ and $m$ are not disjoint. To measure this operator, usually some of the CNOTs (corresponding to the $X$ terms in the syndrome) are replaced with controlled phase gates, while the syndrome qubit is still measured in the $X$ basis. This means that the only type of error that can propagate from the syndrome qubit to the data is still $X$ type error. As such, we can use exactly the same set of flags that we use to protect the syndrome for a CSS code. 

In Figure~\ref{fig:5qubit}, this is illustrated for measuring the $XZZXI$ syndrome of the $\llbracket 5, 1, 3 \rrbracket$ perfect code.

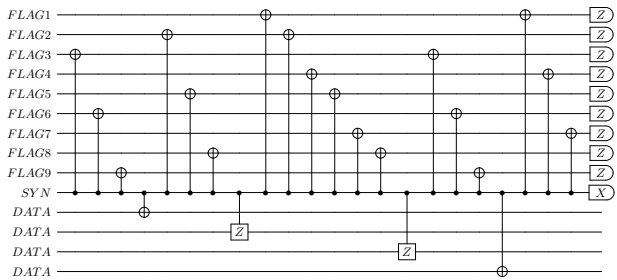
\begin{figure}[h]
    \centering
    \resizebox{0.45\textwidth}{!}{
\hspace{3.2em}
\begin{tabular}{c}
    \vspace{-.6em}\\
    \Qcircuit @C=1em @R=0.3em @!R{
    \lstick{FLAG1} & \qw      & \qw      & \qw      & \qw       & \qw      & \qw      & \qw      & \qw       & \targ    & \qw      & \qw      & \qw      & \qw      & \qw      & \qw       & \qw      & \qw      & \qw      & \qw       & \targ    & \qw      & \qw     & \measureD{Z} \\
\lstick{FLAG2} & \qw      & \qw      & \qw      & \qw       & \targ    & \qw      & \qw      & \qw       & \qw      & \targ    & \qw      & \qw      & \qw      & \qw      & \qw       & \qw      & \qw      & \qw      & \qw       & \qw      & \qw      & \qw     & \measureD{Z} \\
\lstick{FLAG3} & \targ    & \qw      & \qw      & \qw       & \qw      & \qw      & \qw      & \qw       & \qw      & \qw      & \qw      & \qw      & \qw      & \qw      & \qw       & \targ    & \qw      & \qw      & \qw       & \qw      & \qw      & \qw     & \measureD{Z} \\
\lstick{FLAG4} & \qw      & \qw      & \qw      & \qw       & \qw      & \qw      & \qw      & \qw       & \qw      & \qw      & \targ    & \qw      & \qw      & \qw      & \qw       & \qw      & \qw      & \qw      & \qw       & \qw      & \targ    & \qw     & \measureD{Z} \\
\lstick{FLAG5} & \qw      & \qw      & \qw      & \qw       & \qw      & \targ    & \qw      & \qw       & \qw      & \qw      & \qw      & \targ    & \qw      & \qw      & \qw       & \qw      & \qw      & \qw      & \qw       & \qw      & \qw      & \qw     & \measureD{Z} \\
\lstick{FLAG6} & \qw      & \targ    & \qw      & \qw       & \qw      & \qw      & \qw      & \qw       & \qw      & \qw      & \qw      & \qw      & \qw      & \qw      & \qw       & \qw      & \targ    & \qw      & \qw       & \qw      & \qw      & \qw     & \measureD{Z} \\
\lstick{FLAG7} & \qw      & \qw      & \qw      & \qw       & \qw      & \qw      & \qw      & \qw       & \qw      & \qw      & \qw      & \qw      & \targ    & \qw      & \qw       & \qw      & \qw      & \qw      & \qw       & \qw      & \qw      & \targ   & \measureD{Z} \\
\lstick{FLAG8} & \qw      & \qw      & \qw      & \qw       & \qw      & \qw      & \targ    & \qw       & \qw      & \qw      & \qw      & \qw      & \qw      & \targ    & \qw       & \qw      & \qw      & \qw      & \qw       & \qw      & \qw      & \qw     & \measureD{Z} \\
\lstick{FLAG9} & \qw      & \qw      & \targ    & \qw       & \qw      & \qw      & \qw      & \qw       & \qw      & \qw      & \qw      & \qw      & \qw      & \qw      & \qw       & \qw      & \qw      & \targ    & \qw       & \qw      & \qw      & \qw & \measureD{Z} \\ 
\lstick{SYN} & \ctrl{-7} & \ctrl{-4} & \ctrl{-1} & \ctrl{1} & \ctrl{-8} & \ctrl{-5} & \ctrl{-2} & \ctrl{2} & \ctrl{-9} & \ctrl{-8} & \ctrl{-6} & \ctrl{-5} & \ctrl{-3} & \ctrl{-2} & \ctrl{3} & \ctrl{-7} & \ctrl{-4} & \ctrl{-1} & \ctrl{4} & \ctrl{-9} & \ctrl{-6} & \ctrl{-3} & \measureD{X}\\
\lstick{DATA} & \qw      & \qw      & \qw      & \targ     & \qw      & \qw      & \qw      &  \qw      & \qw      & \qw      & \qw      & \qw      & \qw      & \qw      & \qw       & \qw      & \qw      & \qw      & \qw       & \qw      & \qw      & \qw & \qw\\
\lstick{DATA} & \qw      & \qw      & \qw      & \qw       & \qw      & \qw      & \qw      & \gate{Z}  & \qw      & \qw      & \qw      & \qw      & \qw      & \qw      & \qw       & \qw      & \qw      & \qw      & \qw       & \qw      & \qw      & \qw    & \qw \\
\lstick{DATA} & \qw      & \qw      & \qw      & \qw       & \qw      & \qw      & \qw      & \qw       & \qw      & \qw      & \qw      & \qw      & \qw      & \qw      & \gate Z   & \qw      & \qw      & \qw      & \qw       & \qw      & \qw      & \qw     & \qw\\
\lstick{DATA} & \qw      & \qw      & \qw      & \qw       & \qw      & \qw      & \qw      & \qw       & \qw      & \qw      & \qw      & \qw      & \qw      & \qw      & \qw       & \qw      & \qw      & \qw      & \targ     & \qw      & \qw      & \qw & \qw 
}
\vspace{.2em}    
\\  
\end{tabular}
}
    \caption{\label{fig:5qubit} An example of our flag construction for measuring the $XZZXI$ syndrome of the five qubit perfect code. Note that if we were to measure $XXXXI$ the flag circuit would be identical.}
\end{figure}

\subsection{Cost analysis to implement a single stabilizer}\label{sec:cost}
Having produced a fault-tolerant flag qubit construction, we analyze the circumstances in which it outperforms other constructions to protect a single stabilizer measurement. We mainly focus on the number of flag qubits used, as for near-term small-scale devices the number of qubits is severely limited, and also for large-scale devices reducing the number of qubits needed for fault-tolerance allows for more efficient use of hardware.

In our construction, the number of flag qubits used is $(2t + 1) n(w, t)$, where $n(w, t)$ is the number of rows in the classical parity check matrix. For the BCH code on $w$ bits with distance $2t + 1$, the number of rows is given by $t \log_2(w)$ (see Appendix~\ref{appendix:BCH}) for a total cost of $(2t^2 + t)\log_2(w)$ flag qubits.

For $t = O(1)$, this is asymptotically logarithmic in the weight of the stabilizer, which is optimal. However, for $O(1) = w \approx t$, the constant factors can be quite large. For example, for $w = 15, t = 2$, the number of flag qubits used is $(10)(4) = 40$, which is outperformed by the various schemes for fault-tolerance which scale linearly in the weight of the syndrome. Even reducing the number of repetitions to $t + 2$, as in Section~\ref{sec:sims}, yields $32$ flag qubits. It is only once $w \geq 70$ (or $w \geq 48$ assuming that we only need to repeat $t + 2$ times) that our construction outperforms schemes that scale linearly in the weight of the syndrome.

This suggests that exponential saving provided by our constructions only shows up in codes with high-weight stabilizers, such as Bacon-Shor \cite{baconshor} codes on certain choices of the lattice. However, in Section~\ref{sec:multi_syn} we present a slight modification to the syndrome extraction procedure that allows our construction to protect multiple syndromes at once. Under some mild assumptions about the speed at which qubits can be reset, this modification yields a large reduction in qubits needed even for small codes such as the $\llbracket 5, 1, 3 \rrbracket$ code.

We now briefly address the cost in CNOTs to implement our construction. As a very pessimistic upper bound, each row of the parity check matrix $H$ will alternate between $0$ and $1$, requiring $w$ CNOTs per row for a weight $w$ syndrome. With $(2t + 1) t \ceil{\log_2 w}$ total rows, this implies a maximum CNOT cost of $(2t + 1)t w \ceil{\log_2 w} \sim t^2 w \log_2 w$. Realistic parity check matrices do not have rows that alternate between $0$ and $1$, making this upper bound very loose. 

\section{Decoding Procedures}\label{sec:decoding}
Given a certain flag pattern, we must be able to infer a correction to apply to the data; that is, we must decode the flag pattern. In this section, we outline two different decoding algorithms.

\subsection{Brute Force}\label{sec:brute_force}
Section~\ref{sec:general_gadgets} implicitly defines a decoder.
We construct a table of errors and how they propagate. We can consider the set of all errors $e_1, \ldots, e_n$ that have the same syndrome (i.e. $F^c e_i = F^c e_j$ for all $i, j$). Since we can not distinguish between $e_1, \ldots, e_n$ we need to find some correction that will correct any $e_i$ fault-tolerantly.

We can phrase this condition in terms of Hamming balls drawn about the data error that each $e_i$ propagates to. Succinctly, there exists a fault-tolerant recovery map $R$ if and only if, for arbitrary flag pattern $P$ we have 
\[\bigcap_{e_i \in  \{e : F^ce = P\}} B_{\abs{e_i}}(D^c e_i) \not = \emptyset\] where $B_r(d)$ is the Hamming ball with radius $r$ around the point $d$ in the space of all possible errors (or corrections) on the data. This is obvious if we observe that $B_{\abs{e_i}}(D^c e_i)$ is the set of corrections that do not increase the number of errors on the data from the original number of physical errors. This condition is illustrated in Figure~\ref{fig:decoder}.
\begin{figure}[t]
    \centering
    \includegraphics[scale=0.8]{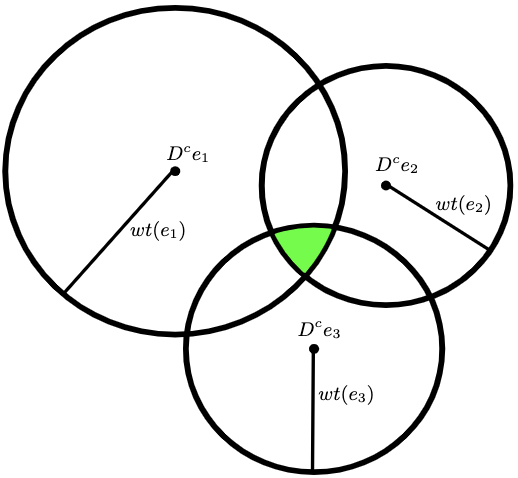}
    \caption{\label{fig:decoder}An example of the decoding process for a set of errors $e_1, e_2, e_3$ with a common syndrome. In green is the set of all corrections which do not increase the number of errors on the data regardless of which physical error produced the observed syndrome. Note that using at least $2t + 1$ repetitions with our construction, a picture like this will not arise. Instead, the center of each small ball will be contained in each larger ball.}
\end{figure}

The brute force decoding algorithm simply computes the intersection of all Hamming balls with appropriate radii centered at the data errors arising from collections of faults with the same flag pattern and picks an arbitrary member of the intersection as the correction to apply.

\subsection{Majority Vote Decoder}\label{sec:maj_vote}
Another natural procedure presents itself. Intuitively, we can use the repetition of the parity check matrix $H$ to correct the errors on the flag pattern, then use the new flag pattern obtained to apply a correction that does not increase the number of errors on the data from the original number of physical errors based by decoding the classical code hidden inside our flag qubits.

\begin{lemma}
    If $F$ is composed of $(t + 1)^2$ repetitions of a classical parity check matrix $H$, the following correction procedure produces a fault-tolerant correction for any error-producing flag pattern $P$:
    \begin{enumerate}
        \item Divide $P$ into $(t + 1)^2$ subcolumns, $P_i, 1 \leq i \leq (t + 1)^2$
        \item Let $P^*$ be the subcolumn which occurs most frequently
        \item Taking $P^*$ as the syndrome for the classical parity check matrix $H$, decode $P^*$ and apply the correction indicated.
    \end{enumerate}
\end{lemma}
\begin{proof}
Note that for any error $e$ the classically computed correction associated with a subpattern produced by a repetition where $e$ has support entirely on columns of $H$ or zero columns (an ``error-free'' repetition) is a fault-tolerant correction for $e$. 
This follows from the fact that in an error-free repetition, the correction produced by the classical decoder exactly identifies the parity of syndrome errors in the area covered by each column of $H$. So at worst the correction produced neglects to correct some even number of errors in each region covered by a single column of $H$. As before, the area covered covers at most one data CNOT, and hence neglecting to correct these errors still yields a fault-tolerant correction. 
So it is enough to show that the most common subpattern is produced by an error-free repetition.

If $e$ is incident upon a column of repetition $i$ which is not a column of $H$, and $e$ is incident upon a column of repetition $j$ which is not a column of $H$, but for all repetitions $k$ for $i < k < j$ all errors are incident upon a column of $H$ or the zero column and no repetition $k$ suffers a flag error, all subpatterns from repetitions $i +1$ to $j - 1$ are identical. So a single error $e_i \in e$ can divide a region of repetitions into two regions with possibly different patterns, with some arbitrary subpattern in between. So it is enough to show that dividing a region of length $(t + 1)^2 - t$ into $t$ subregions produces at least one subregion of length $t + 1$. Since $\frac{( t + 1)^2 - t}{t} > t + 2 - 1 = t+ 1$ this condition is satisfied. So the most common subpattern has multiplicity at least $t + 1$, and hence can only be produced by $ t + 1 $ error-free repetitions in a row.
\end{proof}

\subsection{Intermediate Decoder}\label{sec:intermediate}
Finally, we present an intermediate decoding algorithm. This algorithm still scales exponentially with $t$, but not as badly as the brute force version. It also only requires $2t + 1$ repetitions of the parity check matrix, as the brute force version does. 

Again we call a repetition ``error-free'' if it suffers no flag errors, and each syndrome error falls on one of the columns of the parity check matrix (not in between).
We first decode each subpattern with a decoder for the BCH code (skipping any which can not be decoded). 
Write the decoded error as a bit string of syndrome errors. 
The results of decoding any two error-free repetitions must differ by a bit string of the form $\sum_{i = 1}^{k <= n} 0^{f_1(i)}11 0^{f_2(i)}$ where $n$ is the number of non-error-free repetitions between them and $0^{f_{0, 1}(i)}$ denotes some number of zeros which depends on $i$.

The locations of the $11$ substrings in each term of the sum in the difference vector correspond to the location of a transition error. 
If we can find a set of $t + 1$ subpatterns which all differ from each other in the appropriate way, then at least one of them must be error-free. 
Recall that a minimum weight error which is error-free in repetition $i$ and produces the flag pattern $P$ is a fault-tolerant correction for any other error which is error-free in repetition $i$ and produces $P$. 
So the problem is reduced to finding a minimal number of $0^{f_1(i)}11 0^{f_2(i)}$ strings and assigning them each to a repetition, such that the remaining repetitions differ from each according to the sum of difference vectors between them. After finding such a set of difference vectors, we can choose any of the remaining repetitions as the correction.

\subsection{Comparison of Decoding Costs}
The brute force decoder described in Section~\ref{sec:brute_force} requires compiling a table of all errors and their syndromes. Clearly, the dominant term in the time cost for this decoder is the cost of considering every error. The number of distinct errors is given by 
\[ \sum_{k = 0}^t \binom{n}{k} \]
where $ n $ is the total number of CNOTs used in the circuit (equivalently, the weight of $ C $), which is proportional to the number of flag qubits $ f $. This scales exponentially with the number of flag qubits. If $l$ is the least number of CNOTs attached to a flag qubit, then this decoding algorithm requires $\Omega(\sum_{k = 0}^t \binom{l(2t + 1) + w}{k})$ operations. Using the bound $\sum_{k = 0}^t \binom{n}{k} = \Omega(2^{H(t/n)n})$ we can lower bound this by $\Omega(2^{l(2t)}2^{ H(l)}) = \Omega(2^{2lt})$ where $H(x)$ is the binary entropy of $0 \leq x \leq 1$.

However, the majority vote decoding procedure in Section~\ref{sec:maj_vote} only requires finding the most frequently occurring subpattern of length $t\log w$, before decoding one classical syndrome.

In the case of the BCH code, the decoding algorithm requires $ O(wt) $ operations \cite{BCHdecoder}. Finding the most frequent subpattern takes $O(t^5 \log(w))$ operations for a total complexity of $O(t^5 \log w + w t) = O(t^5 \log w)$.

Finally the intermediate decoder in Section~\ref{sec:intermediate} has complexity $O( \binom{w}{t} \binom{2t + 1}{t})$, where the first term is the number of ways to choose the left/right location of the transition columns, and the second term is the number of ways to choose the repetition each transition column falls in.

\section{Protecting Several Stabilizer Measurements}\label{sec:multi_syn}
In general, several syndromes need to be extracted consecutively, and the entire sequence needs to be repeated to account for measurement errors (in the Shor error correction picture). In this section, we show that the flag qubit construction in Section~\ref{sec:general_gadgets} can be trivially modified to identify error locations in this whole sequence of repeated measurements. Because our construction enjoys sublinear scaling, flagging the entire sequence of syndrome extraction requires fewer qubits than flagging each measurement separately.

\subsection{Connecting and Flagging a Sequence of Stabilizers}
Consider a syndrome extraction circuit in the form of Figure~\ref{fig:syndrome_ex}. We can modify this circuit by adding CNOTs connecting subsequent pairs of syndrome qubits, without changing the measurement outcomes observed as in Figure~\ref{fig:connected_syndrome_ex}.
This modification means that $X$ errors on syndrome qubit $i$ will propagate to all syndrome qubits $j > i$. However, the measurement outcomes of these syndrome qubits will be unaffected since no $Z$ errors propagate and the syndrome qubit is measured in the $X$ basis. Additionally, $X$ errors that propagate from one syndrome qubit to another propagate to trivial errors (stabilizers) on the data. Adding this modification, however, means that a flag gadget starting on syndrome qubit $i$ and ending on syndrome qubit $i > j$, as in Figure~\ref{fig:connected_syndrome_flag} will flag if there are an odd number of errors in the range it covers, regardless of which syndrome qubit they occur on (where we interpret error occurring after the connecting CNOT as unflagged measurement errors). 

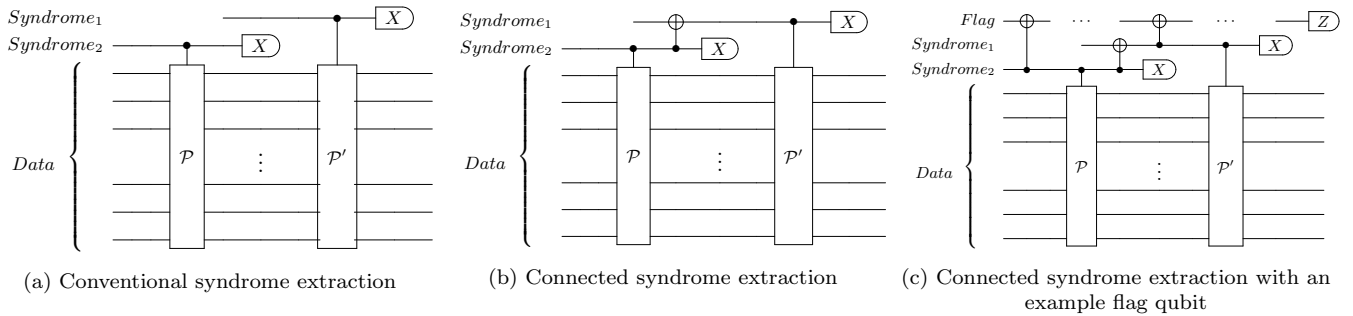
\begin{figure*}
    \centering
    \subfloat[][\label{fig:syndrome_ex}Conventional syndrome extraction]{
        \resizebox{0.33\textwidth}{!}{
        
\hspace{3.2em}
\begin{tabular}{c}
    \vspace{-.6em}
    \Qcircuit @C=1em @R=0.3em @!R{
    \lstick{Syndrome_1}&     &     &                            &     & \qw & \qw & \ctrl{2}                   & \measureD{X}  \\
    \lstick{Syndrome_2}& \qw & \qw & \ctrl{1}                   & \qw & \measureD{X}  \\
    & \qw & \qw & \multigate{6}{\mathcal{P}} & \qw & \qw & \qw & \multigate{6}{\mathcal{P'}} & \qw & \qw \\
    & \qw & \qw & \ghost{\mathcal{F}}        & \qw & \qw & \qw & \ghost{\mathcal{F}}        & \qw & \qw \\
    & \qw & \qw & \ghost{\mathcal{F}}        & \qw & \qw & \qw & \ghost{\mathcal{F}}        & \qw & \qw \\
    &     &     &                            &     & \large \vdots\\
    & \qw & \qw & \ghost{\mathcal{F}}        & \qw & \qw & \qw & \ghost{\mathcal{F}}        & \qw & \qw \\
    & \qw & \qw & \ghost{\mathcal{F}}        & \qw & \qw & \qw & \ghost{\mathcal{F}}        & \qw & \qw \\
    & \qw & \qw & \ghost{\mathcal{F}}        & \qw & \qw & \qw & \ghost{\mathcal{F}}        & \qw & \qw 
    \inputgroupv{4}{8}{4em}{3.3em}{\hspace{-5em}Data}\\
    &
}
\vspace{.2em}
\end{tabular}
}}
    \subfloat[][\label{fig:connected_syndrome_ex}Connected syndrome extraction]{
        \resizebox{0.33\textwidth}{!}{
        
\hspace{3.2em}
\begin{tabular}{c}
    \vspace{-.6em}
    \Qcircuit @C=1em @R=0.3em @!R{
    \lstick{Syndrome_1}&     &     &                            & \targ    & \qw & \qw & \ctrl{2}                   & \measureD{X}  \\
    \lstick{Syndrome_2}& \qw & \qw & \ctrl{1}                   & \ctrl{-1} & \measureD{X}  \\
    & \qw & \qw & \multigate{6}{\mathcal{P}} & \qw & \qw & \qw & \multigate{6}{\mathcal{P'}} & \qw & \qw \\
    & \qw & \qw & \ghost{\mathcal{F}}        & \qw & \qw & \qw & \ghost{\mathcal{P'}}        & \qw & \qw \\
    & \qw & \qw & \ghost{\mathcal{F}}        & \qw & \qw & \qw & \ghost{\mathcal{P'}}        & \qw & \qw \\
    &     &     &                            &     & \large \vdots\\
    & \qw & \qw & \ghost{\mathcal{F}}        & \qw & \qw & \qw & \ghost{\mathcal{P'}}        & \qw & \qw \\
    & \qw & \qw & \ghost{\mathcal{F}}        & \qw & \qw & \qw & \ghost{\mathcal{P'}}        & \qw & \qw \\
    & \qw & \qw & \ghost{\mathcal{F}}        & \qw & \qw & \qw & \ghost{\mathcal{P'}}        & \qw & \qw 
    \inputgroupv{4}{8}{4em}{3.3em}{\hspace{-5em}Data}\\
    &
}
\vspace{.2em}
\end{tabular}
}}
    \subfloat[][\label{fig:connected_syndrome_flag}Connected syndrome extraction with an example flag qubit]{
        \resizebox{0.33\textwidth}{!}{
        
\hspace{3.2em}
\begin{tabular}{c}
    \vspace{-.6em}
    \Qcircuit @C=1em @R=0.3em @!R{
    \lstick{Flag}&\targ & \qw & \cdots                     &           & \targ     & \qw & \cdots & & \measureD{Z}  \\
    \lstick{Syndrome_1}&     &     &                             & \targ     & \ctrl{-1} & \qw &\ctrl{2}                   & \measureD{X}  \\
    \lstick{Syndrome_2}& \ctrl{-2} & \qw & \ctrl{1}                   & \ctrl{-1} & \measureD{X}  \\
    \lstick{}& \qw & \qw & \multigate{6}{\mathcal{P}} & \qw & \qw & \qw & \multigate{6}{\mathcal{P'}} & \qw & \qw \\
    \lstick{}& \qw & \qw & \ghost{\mathcal{F}}        & \qw & \qw & \qw & \ghost{\mathcal{P'}}        & \qw & \qw \\
    \lstick{}& \qw & \qw & \ghost{\mathcal{F}}        & \qw & \qw & \qw & \ghost{\mathcal{P'}}        & \qw & \qw \\
    \lstick{}&     &     &                            &     & \large \vdots\\
    \lstick{}& \qw & \qw & \ghost{\mathcal{F}}        & \qw & \qw & \qw & \ghost{\mathcal{P'}}        & \qw & \qw \\
    \lstick{}& \qw & \qw & \ghost{\mathcal{F}}        & \qw & \qw & \qw & \ghost{\mathcal{P'}}        & \qw & \qw \\
    \lstick{}& \qw & \qw & \ghost{\mathcal{F}}        & \qw & \qw & \qw & \ghost{\mathcal{P'}}        & \qw & \qw 
    \inputgroupv{5}{9}{4em}{3.3em}{\hspace{-5em}Data}\\
    &
}
\vspace{.2em}
\end{tabular}
}}
    \caption{Syndrome extraction of $\mathcal{P}$ and $\mathcal{P'}$ (where $\mathcal{P}, \mathcal{P'}$ are multiqubit Pauli operators)}
\end{figure*}

Suppose that we wish to measure $L$ stabilizers, each with weight $w_i$. 
To ensure fault-tolerance, each round of measurement must be repeated $s$ times, where $s$ depends on the properties of the code. Usually $s$ is taken to be $(t + 1)^2$, so that if at most $t$ errors occur we are guaranteed to see the syndrome corresponding to a fault-tolerant correction appearing $t + 1$ times in a row \cite{bys, shor}.
We define the total weight $W$ as $\sum_{i = 1}^L w_i$. 
Connecting all of these extractions together in the form outlined above leads to a circuit with $sW$ distinct locations that need to be identified by the flag gadgets. Our construction in Section~\ref{sec:general_gadgets} holds, and we can identify up to $t$ errors distributed between these $sW$ locations using only $(2t + 1)t\log(sW)$ flag qubits.

\subsection{Error Propagation and Time Ordering}

Since we only measure flag qubits after all rounds of syndrome extraction, we allow low-weight errors on the syndrome bits to propagate to high-weight errors on the data. This means that some of the syndrome bits measured will be incorrect since we do not measure the flags until all syndromes have been extracted. However, since the syndrome extraction circuit is a Clifford circuit, after measuring the flags we can track the propagation of errors classically and correct any syndrome bits necessary before applying the correction implied by the (updated) syndromes. 
Additionally, since we do not measure flags until after all syndromes have been extracted, we can not stop early if we see $t + 1 $ identical syndromes in a row - instead, we have to continue measuring syndromes for a full $s$ repetitions.
We treat the fault-tolerance of the connected syndrome picture more thoroughly in Appendix~\ref{appendix:connected}. 

This method of connecting syndrome qubits naively increases the time required for a round of syndrome extraction, since the connections induce a time ordering which disallows extracting syndrome bits in parallel. However, the time ordering is not necessary with the introduction of an extra ancilla qubit per syndrome qubit. Instead of connecting two syndrome qubits together, we can instead connect a syndrome qubit to an ancilla in the same way, measure the ancilla in the $Z$ basis, then adjust the flag measurements according to the measurement result (simulating the propagation of the possible $X$ error through the CNOTs classically). This removes any time ordering, and allows for a flag circuit based on our construction to protect multiple syndrome extractions acting in parallel.

\section{Resource Analysis for Connected Syndromes}\label{sec:multi_syn_resource}
Consider a code defined by $L$ (independent) stabilizer generators, each with weight $w$. Assuming that qubits can not be reset (see Section~\ref{sec:reset_time} for a discussion on when this constraint is relaxed), the total number of qubits required to do $s$ rounds of syndrome extraction using Shor error correction or similar methods is linear in $Lsw$. In contrast, the number of qubits required by our construction for the fault tolerant extraction is given by $O(t^2 \log(Lsw))$, where $2t+1$ is the distance of the code. As discussed in Section \ref{sec:cost}, the construction clearly provides an advantage when the weight of the stabilizer $w$ is large. In this section, we show that by connecting multiple syndromes, similar advantage is gained for codes with small weight $w$, such a qLDPC codes.

It is important to emphasize that one main advantage of our construction is that we only need to prepare simple single-qubit states, in contrast to Shor, Steane, or Knill error correction, in which access to highly entangled fault-tolerantly prepared ancilla states is assumed. In particular, in these schemes, the required resources to produce such an entangled state can significantly increase with the distance of the code $2t+1$.  Throughout all comparisons in this section, we have neglected the cost of preparing the ancilla states. Including the cost of fault-tolerantly preparing the ancilla states would increase the resource savings offered by our construction.

\subsection{Applicability to qLDPC codes}
Any family of codes with parameters that result in a small ratio $\frac{t^2 \log Lsw}{Lsw}$ can benefit from our construction. One important family to consider is qLDPC codes \cite{breuckmann2021ldpc} where the weight of each stabilizer is a constant independent of the number of qubits $n$, i.e. $w=O(1)$. Note that this constraint also fixes the number of independent stabilizers to be  $L=\Theta(n)$, since each qubit must be included in at least one stabilizer for the code to be able to correct any single error. For this family of codes, we make the following observation.

\textbf{Remark:} For any qLDPC code, our construction requires exponentially fewer ancilla qubits compared to the Shor style syndrome extraction.

To see this, note that for Shor style syndrome extraction we have $s=\Theta(t^2)$ and therefore the ratio scales as $\Theta (\frac{\log L t^2}{L})$ which is bounded by $\Theta (\frac{\log n}{n})$ for any error-correcting qLDPC code.


The improvement persists when the overhead is compared to schemes other than Shor style syndrome extraction, that use less than $s=\Theta(t^2)$ measurements for each stabilizer. For example, consider hypergraph-product codes \cite{tillich2014HGP} whose distance scales as $t = \Theta(\sqrt n)$. For this family of codes, even with $s = \Theta(t)$ repetitions (inspired by the number of measurements required for the surface code), the ratio becomes $\Theta(\frac{\log n}{n^{1/2}})$, which still shows an exponential improvement.

A regime in which our scheme does not provide an advantage for qLDPC codes is when they satisfy the single-shot property ($s=O(1)$) and have a linear distance ($t = \Theta(n)$).

\subsection{Small-size codes}

Now that we have seen the asymptotic advantage of the scheme in a wide range of code parameters, in this section we investigate the advantage for small-size codes. We present the number of qubits required for some small codes of interest in Table~\ref{table:requirements}. (Note that \cite{1} provides a Shor-style method to do fault-tolerant syndrome extraction of a syndrome for a distance 3 code using $\max\left(3, \ceil{ \frac w 2}\right)$ qubits. This is not considered in Tables~\ref{table:requirements} and~\ref{table:requirements_nice_decoder}.)
\begin{table}
\begin{tabular}{c| c  c c| c |c}
                                           &     &     &     &  Shor EC   & Flag EC\\          
                                           & $t$ & $s$ & $W$ & $sW$& $(2t + 1) t \log(sW)$\\
                                           \hline
     $ \llbracket 5, 1, 3 \rrbracket $ code& 1   & $4$ & $16$& $64$ & $18$ \\
     $ \llbracket 7, 1, 3 \rrbracket $ code& 1   & $4$ & $24$& $96$ & $21$ \\
     $ \llbracket 19, 1, 5 \rrbracket$ code& 2   & $9$ & $84$& $756$ & $100$ \\
\end{tabular}
\caption{\label{table:requirements} Comparison of the number of extra qubits needed to do fault-tolerant syndrome extraction with and without flag qubits}
\end{table}

For large $sW$, the brute force decoder outlined in Section~\ref{sec:sims} is computationally intensive. As such, it might be desirable to instead use $(t + 1)^2$ repetitions of the classical parity check matrix $H$ to construct $F$, so that the more efficient decoder presented in Section~\ref{sec:decoding} may be used. This increases the number of qubits required to $(t + 1)^2 t \log(sW)$, but for large $sW$ this increased number is still less than the requirements for Shor error correction, or similar methods. The number of qubits required to use this decoder is compared to Shor error correction in Table~\ref{table:requirements_nice_decoder}.
\begin{table}
\begin{tabular}{c| c  c c| c |c}
                                           &     &     &     &  Shor EC   & Flag EC\\          
                                           & $t$ & $s$ & $W$ & $sW$& $(t+1)^2 t \log(sW)$\\
                                           \hline
     $ \llbracket 5, 1, 3 \rrbracket $ code& 1   & $4$ & $16$& $64$ & $24$ \\
     $ \llbracket 7, 1, 3 \rrbracket $ code& 1   & $4$ & $24$& $96$ & $28$ \\
     $ \llbracket 19, 1, 5 \rrbracket$ code& 2   & $9$ & $84$& $756$ & $180$ \\
\end{tabular}
\caption{\label{table:requirements_nice_decoder} Comparison of number of extra qubits needed to do fault-tolerant syndrome extraction with and without flag qubits using the decoder specified in Section~\ref{sec:decoding}.}
\end{table}

For distance three codes, there is a flag construction that uses an asymptotically logarithmic number of qubits with a smaller constant factor than our construction \cite{4}.
Using the same method we present to connect syndrome qubits together, the construction in \cite{4} can be leveraged to use even fewer qubits than our method for distance $3$ codes.

Finally, let us compare this method of fault-tolerant syndrome extraction to some non-Shor-style error correction. Specifically, consider extracting syndromes fault-tolerantly using either Knill \cite{knill} or Steane \cite{steane} error correction. Steane EC requires a block of $n$ qubits to extract a single syndrome (where $n$ is the number of physical qubits in the code), while Knill EC requires two blocks. However, neither method requires repeated syndrome extraction. In Table~\ref{table:req_knill_steane} we provide the number of extra qubits required to do a round of error correction using these methods (neglecting ancilla qubits used to prepare the block(s) of qubits, and assuming no reset is possible). We see that our method compares favorably with these methods as well.

\begin{table}[ht]
\begin{tabular}{c | c  c| c c}
                                                 
                                           & n & $ \text{number of syndromes}$ & Steane EC & Knill EC\\
                                           \hline
     $ \llbracket 5, 1, 3 \rrbracket $ code& 5& 4 & 20 & 40\\
     $ \llbracket 7, 1, 3 \rrbracket $ code& 7& 6 & 42 & 84\\
     $ \llbracket 19, 1, 5 \rrbracket$ code& 19& 18 & 342 & 684\\
\end{tabular}
\caption{\label{table:req_knill_steane} Number of qubits required to do a round of error correction using the Steane and Knill methods.}
\end{table}

It also is important to note that it is not always necessary to measure all of the stabilizers $( t + 1)^2$ times in Shor-style error correction \cite{bys, encsyn, ss}. Given that our construction is agnostic to the form of the measured stabilizers, we can directly apply it to an error correction procedure which uses fewer repetitions, such as a recent scheme for short syndrome measurement sequences~\cite{adaptive_syn}. In our resource estimates in this section, we have considered using $(t + 1)^2$ rounds of syndrome extraction.

\section{Computer Simulations}\label{sec:sims}
With the framework provided by Section~\ref{sec:FT_req}, we can do computer-aided searches for fault-tolerant flag gadget constructions. In this section, we search for stabilizer weights where it is possible to use fewer than $2t + 1$ repetitions of the BCH code parity check matrix as well as very small examples for distance $5$ codes.

The construction given in Section~\ref{sec:general_gadgets} only gives an upper bound on the number of repetitions required; it does not always take $2t + 1$ repetitions in order to achieve fault-tolerance for up to $t$ errors. 
 In particular, for measuring a weight $15$ syndrome, fault-tolerance is achieved using only $t + 1$ or $t + 2$ repetitions of the BCH code, as opposed to $2t + 1$ (see Table~\ref{table:simresults}).
\begin{table}
    \centering
    \begin{tabular}{c|cccccc}
     t\textbackslash r& 2 & 3 & 4 & 5\\
     \hline
                     1& Y & Y & Y & Y\\
                     2& N & Y & Y & Y\\
                     3& N & N & N & Y\\
\end{tabular}
    \caption{Fault-tolerance results for $w = 15$, $d = 2t + 1$ and $r$ repetitions of the parity check matrix for a distance $d$ BCH code. `N' indicates non-fault-tolerant parameters, while `Y' indicates fault-tolerant parameters.}
    \label{table:simresults}
\end{table}
This shows that certain parity check matrices have properties that reduce the negative effects of columns produced by transitioning from one column of the check matrix to another.

Similarly, numerical simulations show that 3 repetitions of the $2$ error correcting BCH code on 13 bits are sufficient to deduce fault-tolerant corrections for extracting a weight $13$ stabilizer of a distance $5$ code. In Appendix~\ref{appendix:double} we argue that we can double the weight of the stabilizer by making a small modification to the correction procedure, but no change to the flag qubit pattern. This is not guaranteed to hold when using $3$ repetitions of the parity check matrix instead of $2t + 1 = 5$, but simulations show that in this case, replacing the weight 13 stabilizer with a weight 26 stabilizer and using 3 repetitions of the BCH code on $13$ bits does indeed allow for fault-tolerant corrections. This is significant since the parity check matrix for the BCH code on $13$ bits has $8$ rows. This means we use $24$ flag qubits and one syndrome qubit for a total cost of $25$ qubits to extract a weight $26$ stabilizer, which is less costly than Shor error correction (which would use $26$ qubits). Since in this example $t = 2$, previous flag qubit constructions do not apply (assuming slow reset).

Numerical simulations also show that it is possible to fault-tolerantly extract an arbitrary weight $5$ stabilizer for a distance $5$ code using only $2$ flag qubits, for a total cost of $3$ ancilla qubits. The circuit to do so is shown in Figure~\ref{fig:5qubit2flag} while the correction rules are shown in Table~\ref{table:5qubitrules}. The circuit and correction rules were both obtained by brute force. This method yields a $2$ qubit advantage over Shor syndrome extraction, and a $3$ qubit advantage over the alternative flag gadget construction given for stabilizer codes of any distance \cite{3}.

These examples are summarized in Table~\ref{table:small_examples}.
\begin{table}[h]
    \centering
    \begin{tabular}{c|ccc}
      w& 5 & 15 & 26 \\
      \hline
    qubits& 3 & 25 & 25\\
    repetitions& None & 3 & 3
\end{tabular}
    \caption{Total number of ancilla qubits (including the syndrome qubit) sufficient to measure a weight $w$ stabilizer of a distance $5$ code fault-tolerantly, and the number of repetitions of the BCH parity check matrix used. Note that the example for a weight $5$ stabilizer does not use the parity check matrix construction.}
    \label{table:small_examples}
\end{table}

\begin{figure}
    \centering
    \resizebox{0.4\textwidth}{!}{
\hspace{3.5em}
\begin{tabular}{c}
    \vspace{-.6em}\\
    \Qcircuit @C=1em @R=0.3em @!R{
\lstick{FLAG1} & \qw      & \qw       & \qw      & \qw       & \targ    & \qw       & \qw       & \qw      & \qw       & \targ & \measureD{Z}\\
\lstick{FLAG2} & \targ    & \qw       & \targ    & \qw       & \qw      & \qw       & \qw       & \targ    & \qw       & \qw & \measureD{Z}\\   
\lstick{SYN}   & \ctrl{-1} & \ctrl{1} & \ctrl{-1} & \ctrl{2} & \ctrl{-2} & \ctrl{3} & \ctrl{4} & \ctrl{-1} & \ctrl{5} & \ctrl{-2} & \measureD{X} \\
\lstick{DATA1}  & \qw      & \targ     & \qw      & \qw       & \qw      & \qw       & \qw       & \qw      & \qw       & \qw\\
\lstick{DATA2}  & \qw      & \qw       & \qw      & \targ     & \qw      & \qw       & \qw       & \qw      & \qw       & \qw\\
\lstick{DATA3}  & \qw      & \qw       & \qw      & \qw       & \qw      & \targ     & \qw       & \qw      & \qw       & \qw\\
\lstick{DATA4}  & \qw      & \qw       & \qw      & \qw       & \qw      & \qw       & \targ     & \qw      & \qw       & \qw\\
\lstick{DATA5}  & \qw      & \qw       & \qw      & \qw       & \qw      & \qw       & \qw       & \qw      & \targ     & \qw
}
\vspace{.2em}    
\\  
\end{tabular}
}
    \caption{Flag gadget for fault-tolerantly extracting a weight $5$ stabilizer.}
    \label{fig:5qubit2flag}
\end{figure}
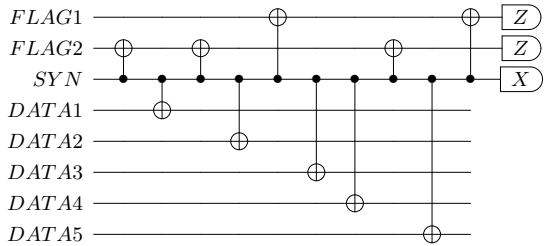

\begin{table}
    \centering
    \begin{tabular}{c|c}
        \thead{$FLAG1, FLAG2$ \\Measurements} & \thead{Correction\\ on $DATA$} \\
        \hline
         $+1, +1$&  $I$\\
         $+1, -1$&  $X_1$\\
         $-1, +1$&  $I$\\
         $-1, -1$&  $X_1X_2X_3$
    \end{tabular}
    \caption{Fault-tolerant correction rules for the circuit presented in Figure~\ref{fig:5qubit2flag}. Note that there are other choices for fault-tolerant correction rules.}
    \label{table:5qubitrules}
\end{table}

In general, this method of syndrome extraction yields performance improvements for any code with relatively high-weight syndromes (or high total weight, as in Section~\ref{sec:multi_syn}). One example of such a family of codes is produced 
when choosing the set of stabilizers measured according to some classical code \cite{ashikhmin, bys}. 
Generally, the weight of the stabilizers measured is much greater than the weight of the stabilizer generators -- for example, a distance $d$ (for $d$ sufficiently large) rotated surface code equipped with the $[16, *, d]$ BCH code measures stabilizers of weight approximately $28$. Other examples can be obtained from code concatenation, which leads to high-weight stabilizers being measured after only a few concatenations.
\section{Qubit Reset}\label{sec:reset_time}
Although we have focused on the regime in which qubit reset is slow or impractical, our construction can also provide a reduction in the number of qubits necessary for fault-tolerant error correction when qubit reset is practical. In general, as qubit measurement and reset gets faster, our construction provides fewer reductions.

Suppose that qubits can be reset in parallel in time $\tau$, and single qubit measurements can be performed in time $\mu$ (where a CNOT takes unit time). Note that if qubits can not be reset (in parallel) the method of connected syndrome extraction presented in Section~\ref{sec:multi_syn} is more useful than if qubits can be reset (in parallel).
 In Tables~\ref{table:requirements} and~\ref{table:requirements_nice_decoder} we implicitly assume that $\tau$ is arbitrarily large. 
 However, it is of course possible to reset qubits. We perform resource estimation for a round of error correction for Shor syndrome extraction and  flagged (connected) syndrome extraction. The maximum number of qubits required for flagged syndrome extraction is given simply by $(2t + 1)t \ceil{\log_2 (sW)}$ as in Section~\ref{sec:multi_syn}. To estimate the resources required for Shor syndrome extraction we use Algorithm~\ref{alg:resource_est}, which provides a lower bound on the number of qubits necessary when doing a round of error correction as fast as possible.

\begin{algorithm}[H]
\begin{algorithmic}
\State \INPUT{a set of generators $G$ and a stack of available qubits $p$}
\State used $\gets \emptyset$
\State busy $\gets \emptyset$
\For{each generator $g \in G$}
    \State move any qubits which have become non-busy from busy to top of $p$
    \State selected $\gets$ first $\abs{g}$ qubits from $p$
    \State used $\gets$ used $\cup$ selected
    \State busy $\gets$ selected and times selected qubits will become available again
\EndFor
\State \Return $|\text{used}|$
\end{algorithmic}
\caption[]{\label{alg:resource_est} Outline of our strategy for estimating resources required for Shor EC}
\end{algorithm}
 
To provide a concrete example, we consider the recent experiments on superconducting hardware \cite{google}, where measurement and reset times are approximately $500$ ns and $160$ ns respectively. With a CNOT time of approximately $13$ ns \cite{CNOT_time}, this corresponds to $\tau = 38.5, \mu = 12.3$. Resource estimation shows that for these values of $\tau, \mu$, our construction yields a qubit advantage over Shor EC for the $\llbracket 9, 1, 3\rrbracket$ Shor code, but not for the $\llbracket 7, 1, 3 \rrbracket$ Color code. This aligns with the fact that the stabilizers that need to be protected in the Shor code (the weight $6$ $X$-type stabilizers) are higher weight than the stabilizers for the Color code (which are weight $4$), meaning that Shor syndrome extraction requires fewer ancilla qubits to be used simultaneously when extracting the syndromes for the Color code. This again shows that our construction is especially useful when measuring high-weight stabilizers.
 
Other recent experiments on ion trap hardware \cite{monroe} give measurement times corresponding to $\mu = 13$. Although reset times are not given, we assume a comparable time scale and set $\tau = 13$ for resource estimation. In this regime, where CNOTs are relatively slow, we begin to see qubit advantages for our construction on a distance $3$ code when stabilizers are weight ${\sim}13$. 

It may be possible to achieve qubit advantages using our construction even in the case that $\tau$ and $\mu$ are relatively small and when measuring low-weight stabilizers by using fewer than $2t + 1$ repetitions of the parity check matrix. The numerical results in Section~\ref{sec:sims} show that it is not always necessary to use $2t + 1$ repetitions of the parity check matrix, and it is often sufficient to use $t + 1$ repetitions. 

\section{Conclusion}
In this work, we have developed a framework to design flag gadgets based on classical codes. This framework allows for enough freedom to achieve logarithmic cost scaling with the appropriate choice of code (e.g. the BCH code), to perform fault-tolerant syndrome measurement of any general quantum code.

To maximize the gain from this exponential
saving, we have proposed methods to fault-tolerantly measure multiple stabilizers using a single gadget. We proposed several small examples of the constructions using a computer-assisted search that can be appropriate for near-term experiments on small quantum computers. We discussed the overhead when slow qubit-reset is available and proposed several decoding strategies.



This work leaves several questions open. In Section~\ref{sec:general_gadgets} we assumed that columns introduced by the unfolding procedure in Section~\ref{sec:unfolding} were completely arbitrary, and hence used $2t + 1$ repetitions of the parity check matrix to guarantee that they would not cause non-fault-tolerant corrections. However, it may be possible to characterize these columns in a way that allows for fewer repetitions (or, in the extreme case, to remove these columns altogether). Incorporating other physically relevant constraints, such as geometric locality of interactions is another direction to pursue.

In addition,  our construction uses a two-step procedure, in which first the propagated errors from the syndrome onto data are corrected, then the syndrome bits are used to do error correction according to the quantum code. However, one can integrate the information from the flag qubits and the syndrome bits in order to directly infer the correction to apply to the data, possibly using fewer flag qubits than the two-step version, and can add structure between multiple syndrome qubits (as in \cite{steane_flag} for instance).

\section{acknowledgement}
This material is based upon work supported by the U.S. Department of Energy, Office of Science, National Quantum Information Science Research Centers, Quantum Systems Accelerator. Additional support by NSF CAREER award No. CCF-2237356 and NSF Grant No. CCF-1954960 is acknowledged.

\bibliographystyle{apsrev}
\bibliography{refs.bib}

\appendix
\section{BCH Codes}\label{appendix:BCH}
Given $m$ such that $2^m \geq w$, the BCH code correcting up to $t \leq \frac w 2$ errors out of $w$ locations uses at most $mt$ parity checks. The parity check matrix is constructed using a \emph{primitive element} $\alpha$ of $GF(2^m)$ (i.e. an element $\alpha$ of the finite field with $2^m$ elements such that every nonzero element of $GF(2^m)$ can be written as $\alpha^i$ for some $i$).

The (redundant) parity check matrix $H'$ is defined as
\[
    H' := \begin{pmatrix}
    1 & \alpha      & \alpha^2        & \ldots & \alpha^{w - 1}\\
    1 & \alpha^2    & (\alpha^2)^2    & \ldots & (\alpha^3)^{w - 1}\\
      &             &                 &\vdots\\
    1 & \alpha^{2t} & (\alpha^{2t})^2 & \ldots & (\alpha^{2t})^{w - 1}
    \end{pmatrix}.
\]
It is easy enough to see that $\sum (\alpha^i)^2 = 0$ iff $\sum \alpha^i = 0$ since $GF(2^m)$ is a field of characteristic two -- consequently every other row of $H'$ is redundant. We can remove every other row so that $H_{ij} = (\alpha^{2j + 1})^{i + 1}$, for $i,j$ starting at zero.

Representing elements of $GF(2^m)$ as bit strings of length $m$ produces a parity check matrix with $tm$ rows. Since $2^m \geq w$, this corresponds to $t\ceil{\log_2 w}$ rows (or parity checks).

We now show that $H'$ yields enough information to correct up to $t$ errors. Recall that this is equivalent to any set of up to $2t$ columns of $H'$ being linearly independent. Suppose for the sake of contradiction that $H'v = 0$ for some $v$ such that $\abs{v} \leq 2t$, where we use the full (redundant) matrix $H'$. Then \[\begin{pmatrix}
    \alpha^{j_1} & \alpha^{j_2} & \ldots & \alpha^{j_{\abs{v}}}\\
    (\alpha^{j_1})^2 & (\alpha^{j_2})^2 & \ldots & (\alpha^{j_{\abs{v}}})^2\\
                 &             & \vdots &                   \\
    (\alpha^{j_1})^{2t} & (\alpha^{j_2})^{2t} & \ldots & (\alpha^{j_{\abs{v}}})^{2t}\\
\end{pmatrix} \begin{pmatrix}
1 \\ 1 \\ \vdots \\ 1
\end{pmatrix} = 0\] for all $j_i$ such that $v_{j_i} = 1$.
Since $2t \leq w$ we can truncate this equation to the first $2t$ rows so that the matrix on the left hand side is square. Factoring out a power of $\alpha^{j_i}$ from the $j_i$ column shows that this equation reduces to the determinant of a Vandermonde matrix being equal to zero, which is a contradiction. Since $H$ has the same column independence properties as $H'$, this concludes the proof that $H$ is a parity check matrix for a $t$ error correcting code. Note that it is possible that the true distance of the code given by $\ker H$ is greater than $2t + 1$.
\section{Ordering columns of parity check matrix}\label{appendix:ordering}
In Section~\ref{sec:unfolding} we required that the parity check matrix in question admits an ordering of its columns such that, when unfolded, there are no zero columns between two nonzero columns. Here we provide an explicit ordering satisfying this constraint.
\begin{lemma}
    For any arbitrary binary matrix $H$ that does not contain the zero vector as a column, there exists a matrix $H'$ obtained by reordering columns of $H$ such that $\forall_{k, i}: H'_i \oplus [H'_i \oplus H'_{i - 1}][k] \not = 0$.
\end{lemma}
\begin{proof}
Obtain $H'$ from $H$ by sorting the columns of $H$ in descending order from left to right, where we interpret each column as a binary integer with the highest significance bit at the bottom.

Suppose for the sake of contradiction that there exists some $i, k$ such that $H'_i \oplus [H'_i \oplus H'_{i - 1}][k] = 0$. 
Let the index of the $k-$th nonzero element of $H'_i \oplus H'_{i - 1}$ be called $j$.

Since $H'_i \oplus [H'_i \oplus H'_{i - 1}][k] = 0$, we have $H'_i$ is zero after index $j$.
By the ordering assumption, this implies that $H'_{i - 1}$ is zero after index $j$ as well.

Similarly, since the first $j$ elements of $H'_i \oplus [H'_i \oplus H'_{i - 1}] $, are equal to the first $j$ elements of $H'_i \oplus H'_i \oplus [H'_{i - 1}] = H'_{i - 1}$, we have the first $j$ elements of $H'_{i -1}$ are zero. So $H'_{i - 1}$ is the zero vector.
This contradicts our assumption on $H$.

So sorting the columns in descending order yields an ordering which does not produce the zero vector as an intermediate column.
\end{proof}
\section{Fault Tolerance of Flagged Connected Syndromes}\label{appendix:connected}
We prove that the flagged connected syndrome procedure given in Section~\ref{sec:multi_syn} is fault-tolerant. First, consider the case in which we can perfectly identify all errors on the data that propagated from a syndrome qubit. In this case, it is clear that the entire method of repeated syndrome extraction is fault-tolerant, since after measuring the flags we can track the effect that these perfectly identified errors have on the syndromes and adjust syndrome bits as necessary.

Our construction does not allow us to perfectly identify propagating errors and their effects on the data. However, we can reduce to the previous case easily. If $k \leq t$ errors are suffered on the syndromes, we identify a correction that leaves at most $k$ errors on the data. This is equivalent to perfectly identifying the error on the data that propagated from the syndrome and also suffering $k$ unknown errors on the data. So flagging connected syndromes and only applying the corrections implied by the flags at the end does not impact fault-tolerance.
\section{Reducing Effective Stabilizer Weight by \texorpdfstring{$\frac 1 2$}{1/2}}\label{appendix:double}
In \cite{4}, the authors make the observation that placing two data CNOTs in each region uniquely identified by a flag pattern is sufficient to ensure fault-tolerance. To show this, the correction rule is changed slightly so that instead of applying a correction anywhere in the flagged region, the correction must be applied between the pair of data CNOTs. We can apply this optimization to our construction as well, when we repeat the parity check matrix at least $2t + 1$ times. It might also be possible to apply this optimization when repeating fewer than $2t + 1$ times, but proving this requires stronger characterizations of the transition columns.

In the case where we use $(t + 1)^2$ repetitions (as in Section~\ref{sec:decoding}) and apply the correction associated with the most frequent flag subpattern, the modification is obvious. If the BCH decoder gives $\{e_i\}$ as the correction, where $e_i$ corresponds to applying a correction in a certain region, it is enough to simply assume the correction falls between the pair of data CNOTs in that region. 

The procedure is similar for the case in which we use fewer repetitions (Section~\ref{sec:general_gadgets}). For a given flag pattern $P$, the algorithm outlined produces a correction of minimal weight which has the same flag pattern. The only difference when we add a data CNOT as a pair to each previous data CNOT is that the set of minimal weight corrections may include non-fault-tolerant corrections. However, the correction corresponding to the case where each component of the correction is between the new pair of data CNOTs is fault-tolerant. So we just need some rules for identifying this correction. It is clear that if every component correction comes after an odd number of data CNOTs the total correction will consist of corrections falling between pairs of data CNOTs and will hence be fault-tolerant. Such a correction will always exist in the set of minimal weight corrections with syndrome $P$, since shifting a component correction past at most one data CNOT within the same region of flags does not change the weight or the syndrome.

It is important to note that it may not be possible to replace each data CNOT/qubit with two data CNOTs/qubits in general. It is only possible in our construction because of the fact that errors with the same flag pattern differ in a very specific way -- namely that if two errors $e_1, e_2$ have the same flag pattern each physical error on the syndrome can be paired with another physical error on the syndrome from either $e_1$ or $e_2$ with at most one data CNOT in between (or two data CNOTs in between when replacing each data CNOT by two data CNOTs), as proven in Lemma~\ref{lem:same_flag_pattern}. 

\end{document}